\numberwithin{equation}{section}
\newcommand{\bdm}{\begin{displaymath}}
\newcommand{\edm}{\end{displaymath}}
\newcommand{\bay}{\begin{array}{c}}
\newcommand{\eay}{\end{array}}
\newcommand{\ben}{\begin{enumerate}}
\newcommand{\een}{\end{enumerate}}
\newcommand{\beq}{\begin{equation}}
\newcommand{\eeq}{\end{equation}}
\newcommand{\beqn}{\begin{eqnarray}}
\newcommand{\eeqn}{\end{eqnarray}}
\newcommand{\bml}[1]{\begin{multline} #1 \end{multline}}
\newcommand{\bmln}[1]{\begin{multline*} #1 \end{multline*}}
\newcommand{\lf}{\left}
\newcommand{\ri}{\right}
\newcommand{\xv}{\mathbf{x}}
\newcommand{\rv}{\mathbf{r}}
\newcommand{\nv}{\mathbf{n}}
\newcommand{\fv}{\mathbf{F}}
\newcommand{\diff}{\mathrm{d}}
\newcommand{\eps}{\varepsilon}
\newcommand{\dist}{\mathrm{dist}}
\newcommand{\as}{\alpha_{\star}}
\newcommand{\gav}{\bm{\gamma}}
\newcommand{\nuv}{\bm{\nu}}
\newcommand{\phiv}{\bm{\Phi}}
\newcommand{\glf}{\mathcal{G}^{\mathrm{GL}}}
\newcommand{\gle}{E^{\mathrm{GL}}}
\newcommand{\glm}{\psi^{\mathrm{GL}}}
\newcommand{\gldom}{\mathscr{D}^{\mathrm{GL}}}
\newcommand{\aav}{\mathbf{A}}
\newcommand{\aavm}{\mathbf{A}^{\mathrm{GL}}}
\newcommand{\hex}{h_{\mathrm{ex}}}
\newcommand{\theo}{\Theta_0}
\newcommand{\glfk}{\mathcal{G}_{\kappa}^{\mathrm{GL}}}
\newcommand{\glek}{E_{\kappa}^{\mathrm{GL}}}
\newcommand{\glmk}{\psi^{\mathrm{GL}}_{\kappa}}
\newcommand{\aavmk}{\mathbf{A}_{\kappa}^{\mathrm{GL}}}
\newcommand{\glfe}{\mathcal{G}_{\eps}^{\mathrm{GL}}}
\newcommand{\glfet}{\mathcal{G}_{\eps, \ann}^{\mathrm{GL}}}
\newcommand{\glee}{E_{\eps}^{\mathrm{GL}}}
\newcommand{\glff}{\mathcal{F}^{\mathrm{GL}}_{\eps}}
\newcommand{\eones}{E^{\mathrm{1D}}_{\star}}
\newcommand{\fs}{f_{\star}}
\newcommand{\onedom}{\mathscr{D}^{\mathrm{1D}}}
\newcommand{\curl}{\mathrm{curl}}
\newcommand{\ann}{\mathcal{A}_{\eps}}
\newcommand{\annt}{\tilde{\mathcal{A}}_{\eps}}
\newcommand{\acut}{\mathcal{A}_{\mathrm{cut}}}
\newcommand{\ocut}{\Omega_{\mathrm{cut}}}
\newcommand{\trial}{\psi_{\mathrm{trial}}}
\newcommand{\cellj}{\mathcal{C}_j}
\newcommand{\celldj}{\mathcal{D}_j}
\newcommand{\disp}{\displaystyle}
\newcommand{\tx}{\textstyle}
\newcommand{\Z}{\mathbb{Z}}
\newcommand{\R}{\mathbb{R}}
\newcommand{\F}{\mathcal{F}}
\newcommand{\E}{\mathcal{E}}
\newcommand{\OO}{\mathcal{O}}
\newcommand{\al}{\alpha}
\newcommand{\Om}{\Omega}
\newcommand{\one}{\mathds{1}}
\newcommand{\Hc}{H_{\mathrm{c}1}}
\newcommand{\Hcc}{H_{\mathrm{c}2}}
\newcommand{\Hccc}{H_{\mathrm{c}3}}
\newcommand{\Hstar}{H_{\mathrm{cc}}}
\newtheorem{teo}{Theorem}[section]
\newtheorem{lem}{Lemma}[section]
\newtheorem{pro}{Proposition}[section]
\newtheorem{asum}{Assumption}[section]
\theoremstyle{remark}
\newtheorem{remark}{Remark}[section]
\newcommand{\fone}{\E^{\mathrm{1D}}}
\newcommand{\eone}{E^{\mathrm{1D}}}
\newcommand{\eoneo}{E ^{\rm 1D}_{0}}
\newcommand{\jv}{\textbf{j}}
\begin{document}

%\markboth{\scriptsize{\textsc{Correggi, Rougerie} -- Surface Superconductivity}}{\scriptsize{\textsc{Correggi, Rougerie} -- Surface Superconductivity}}

\title{Surface Superconductivity in Presence of Corners}

\author[M. Correggi]{Michele CORREGGI}
\address{Dipartimento di Matematica e Fisica, Universit\`{a} degli Studi Roma Tre, L.go San Leonardo Murialdo, 1, 00146, Rome, Italy.}
\email{michele.correggi@gmail.com}

\author[E.L. Giacomelli]{Emanuela L. Giacomelli}
\address{Dipartimento di Matematica ``G. Castelnuovo'', ``Sapienza'' Universit\`{a} di Roma, P.le Aldo Moro, 5, 00185, Rome, Italy}
\email{giacomelli@mat.uniroma1.it}

\date{\today}

% \author{M. Correggi${}^{a}$, N. Rougerie${}^{b}$
% \\
% \normalsize\it ${}^{a}$ Dipartimento di Matematica e Fisica, Universit\`{a} degli Studi Roma Tre,	\\
% \normalsize\it L.go San Leonardo Murialdo, 1, 00146, Rome, Italy.	\\
% \normalsize\it ${}^{b}$ Universit\'e de Grenoble 1 \& CNRS, LPMMC \\
% \normalsize\it Maison des Magist\`{e}res CNRS, BP166, 38042 Grenoble Cedex, France.}
%
% \date{January 11th, 2015}

\begin{abstract}
We consider an extreme type-II superconducting wire with non-smooth cross section, i.e., with one or more corners at the boundary, in the framework of the Ginzburg-Landau theory. We prove the existence of an interval of values of the applied field, where superconductivity is spread uniformly along the boundary of the sample. More precisely the energy is not affected to leading order by the presence of  corners and the modulus of the Ginzburg-Landau minimizer is approximately constant along the transversal direction. The critical fields delimiting this surface superconductivity regime coincide with the ones in absence of boundary singularities.
%We expect that in presence of corners the energy asymptotic in the superconducting surface regime is the same as the one for smooth domains. Proving this asymptotic in the interval $[0,\Theta_0^{-1}]$, would also imply the existence of an intermediate critical field between $H_{c_2}$ and $H_{c_4}$: for the moment it is only known that the critical field $H_{c_3}$ for smooth domains is replaced by $H_{c_4}$ in the presence of corners and there are no works in which it is rigorously proved the existence of an intermediate regime.
%We investigate, within 2D Ginzburg-Landau theory, the ground state of a type-II superconducting cylinder in a parallel magnetic field varying between the second and third critical values. In this regime, superconductivity is restricted to a thin shell along the boundary of the sample and is to leading order constant in the direction tangential to the boundary. We exhibit a correction to this effect, showing that the curvature of the sample affects the distribution of superconductivity. %Our results are mathematically rigorous in the asymptotic regime of an ``extreme type II'' superconductor and take the form of an estimate of the $L^4$ norm of the order parameter in a given region.
\end{abstract}

\maketitle

\tableofcontents

\section{Introduction and Main Result}\label{sec:intro}

% qui ci mettiamo l'intro, la descrizione del modello e il Teorema principale con l'asintotica dell'energia e la convergenza $L^2$ del parametro d'ordine

The success of Ginzburg-Landau (GL) theory in predicting and modelling the response of a superconductor to an external applied field can not be overlooked. Since the `50s, when is was proposed by \textsc{V.L. Ginzburg} and \textsc{L.D. Landau} \cite{GL} as a phenomenological theory to describe superconductivity near the critical temperature marking the transition to the superconducting state, several investigations have been performed within the GL theory from both the physical and mathematical view points (see, e.g., the monographs \cite{FH1,SS} and references therein).

Here we are mainly concerned with the phenomenon of {\it surface superconductivity}, which was predicted and observed in experiments in the `60s \cite{SJdG} (see also \cite{NSG} for recent experimental data). When the applied field becomes larger than some critical value ({\it second critical field}), the external magnetic field penetrates the sample and destroys superconductivity everywhere but in a narrow region close to the sample boundary.

%The Ginzburg-Landau theory is a mathematical physical theory used to describe superconductivity. It was proposed in 1950 by Vitaly Lazarevich Ginzburg and Lev Landau as a phenomenological macroscopic model to  describe the response of a superconductor to an applied magnetic field. It was later justified by Gorkov as emerging from the microscopic Bardeen-Cooper-Schrieffer (BCS) theory and should thus be seen as a mean-field/semi-classical approximation of many-body quantum mechanics.

The GL free energy of a type-II superconductor confined to an infinite cylinder of cross section $\Omega\subset\mathbb{R}^2$ is given by
\begin{equation}
	\label{eq: glfk}
	\glfk[\psi, \textbf{A}]= \displaystyle\int_\Omega \diff\textbf{r} \left\{ |(\nabla +i\hex\textbf{A})\psi|^2-\kappa^2|\psi|^2+ \tx\frac{1}{2}\kappa^2|\psi |^4 \ri\} + \hex^2\displaystyle\int_{\R^2}\diff\textbf{r}\; |\mbox{curl} \textbf{A} - 1|^2,
\end{equation}
where $\psi: \Omega \rightarrow \mathbb{C}$ is the order parameter, $\hex \textbf{A}:\mathbb{R}^2\rightarrow\mathbb{R}^2$ the vector potential, generating the induced magnetic field $ \hex \curl \aav $. The applied magnetic field is thus of uniform intensity $ \hex $ along the superconducting wire. The parameter $\kappa>0$ is a physical quantity (inverse of the penetration depth), which is typical of the material and an extreme type-II superconductor is identified by the condition $ \kappa \gg 1 $. We also recall the physical meaning of the order parameter $\psi$, i.e., $|\psi |^2$ is a measure of the relative density of superconducting Cooper pairs: $|\psi|$ varies between 0 and 1 and $|\psi |=0$ in a certain region means that there are no Cooper pairs there and thus a loss of superconductivity, whereas if $|\psi |=1$ somewhere then all the electrons are superconducting in the region. The cases $|\psi | \equiv 0$ and $|\psi| \equiv 1$ everywhere in $\Omega$ correspond to the \emph{normal} and to the \emph{perfectly superconducting states} respectively.

In the present paper we put the focus on {\it non-smooth} domains $ \Omega $, i.e., more precisely we assume that $ \Omega $ is piecewise smooth but contains finitely many {\it corners}. This is expected to generate a richer physics for large applied magnetic fields (see below), since the corners can act as attractors for Cooper pairs and, before eventually disappearing, superconductivity might survive close to them.

The equilibrium state of the superconductor given by the configuration $ (\glmk, \aavmk) \in \gldom $ is a minimizing pair for the functional $ \glfk $. The minimal energy will be denoted by $ \glek $ and the minimization domain can be taken to be
\beq
	\gldom = \lf\{ \lf( \psi, \aav\ri) \in H^1(\Omega) \times H^1_{\mathrm{loc}}(\R^2; \R^2) \: \big| \: \curl \aav - 1 \in L^2(\R^2) \ri\}.
\eeq
Any critical point $ (\psi, \aav) $ of the GL functional solves, at least weakly, the GL equations
\beq
	\label{eq: GL eqs}
	\begin{cases}
		- \lf( \nabla + i \hex \aav \ri)^2 \psi = \kappa^2 \lf(1 - |\psi|^2 \ri) \psi,		& \mbox{in } \Omega,		\\
		- \hex \nabla^{\perp} \curl \aav = \jv_{\aav}[\psi] \one_{\Omega},							& \mbox{in } \R^2,	\\
		\nv \cdot \lf( \nabla + i \hex \aav \ri) \psi = 0,									& \mbox{on } \partial \Omega,
	\end{cases}
\eeq
where the last boundary condition has to be interpreted in trace sense and
\beq
	\label{eq: current}
	\jv_{\aav}[\psi] : = \tx\frac{i}{2} \lf[ \psi \lf( \nabla - i \hex \aav \ri) \psi^* - \psi^* \lf(\nabla + i \hex \aav \ri) \psi \ri],
\eeq
is the superconducting current. Any minimizing configuration $ (\glmk, \aavmk) $ is a weak solution of the above system too.

The existence of such a minimizing pair and the fact that it solves the GL equations  \eqref{eq: GL eqs}  is a rather standard result \cite[Chapt. 15]{FH1} and we skip the discussion here. Note however that, because of the presence of a non-smooth boundary, it is more convenient to consider a GL functional where the last term is integrated all over the plane $ \R^2 $. In the smooth boundary case there is a one-to-one correspondence between minimizers on $ \R^2 $ and minimizing configurations of the energy restricted to $ \Omega $ (see, e.g., \cite[Proposition 3.4]{SS}) and therefore the two settings are perfectly equivalent. For non-smooth boundaries such a correspondence is difficult to state because of possible boundary singularities of the solution. It is easy to prove however \cite[Lemma 15.3.2]{FH1} that
\beq
	\curl \aav = 1,	\qquad		\mbox{in } \R^2 \setminus \overline{\Omega},
\eeq
for any $ \aav $ weak solution of the GL equations \eqref{eq: GL eqs}.

As the intensity of the applied field increases, one observes three subsequent transitions in an extreme type-II superconductor with smooth boundary and correspondingly three critical values of $ \hex $ can be identified, i.e., three {\it critical fields}:
\begin{itemize}
	\item the first critical field $ \Hc \propto \log \kappa $ is associated to the nucleation of vortices;
	\item at $ \Hcc = \kappa^2 $ the transition from bulk to boundary superconductivity occurs;
	\item above $ \Hccc = \theo^{-1} \kappa^2 $ the normal state is the unique minimizer of the GL functional and superconductivity is lost everywhere.
\end{itemize}

In presence of boundary singularities the above picture might require some modification and, although the first transition is untouched, being a bulk phenomenon, the two other transitions can be affected by the corners. More precisely it is known \cite{B-NF} that, under a suitable unproven conjecture about the ground state behavior of a magnetic Schr\"{o}dinger operator in an infinite sector, the third critical field $ \Hccc $ can be strictly larger than the one in the case of smooth domains, although of the same order $ \kappa^2 $. A more refined analysis reveals \cite[Theorem 1.4]{B-NF} that, at least if one of the corners has an acute angle, the expected value of $ \Hccc $ is
\beq
	\mu(\alpha)^{-1} \kappa^2,
\eeq
$ \mu(\alpha) $ standing for the ground state of $ - (\nabla + \frac{1}{2} i \xv^{\perp})^2 $, i.e., a Schr\"{o}dinger operator with uniform magnetic field of unit strength in an infinite sector of opening angle $ 0 < \alpha < \pi/2 $, and with $ \alpha $ equal to the smallest angle of the corners. The unproven assumption is that for any acute angle $ 0 < \alpha < \pi $, the eigenvalue $ \mu(\alpha) $ is strictly smaller than $ \theo $. For the discussion of this and many more questions concerning the Schr\"{o}dinger operator with magnetic field in a corner domain we refer to \cite{Bo,BD,BDP,BR}.

When (and if) $ \mu(\alpha) > \theo $ thus the third critical field changes value and in fact it is conjectured that another transition might take place below $ \Hccc $: as proven by usual Agmon estimate superconductivity can survive only at the boundary, if the field is above $ \Hcc $; however one might distinguish between a boundary state distributed along the boundary and one concentrated near the corner of smallest opening angle. Under the aforementioned conjecture this is indeed the structure of the GL ground state before the transition to the normal state takes place. As strongly suggested by the modified Agmon estimates proven in \cite[Theorem 1.6]{B-NF} one should introduce an additional critical field $ \Hstar $ so that
\beq
	\Hcc < \Hstar \leq \Hccc
\eeq
marking such a phase transition from {\it boundary} to {\it corner superconductivity}. The order of magnitude of $ \Hstar $ is clearly of order $ \kappa^2 $.

In the present paper we prove that, if $ \Omega $ contains finitely many corners, there is uniform surface superconductivity for applied fields satisfying asymptotically
\beq
	1 < \frac{\hex}{\kappa^2} < \theo^{-1},
\eeq
and the energy expansion is the same as in absence of corners, at least to leading order. As a consequence we infer the asymptotic estimate
\beq
	 \theo^{-1} \leq \frac{\Hstar}{\kappa^2} \leq \frac{\Hccc}{\kappa^2},
\eeq
which must hold true also in presence of corners.

Before going deeper in the discussion of our result, we first make a change of units, which is mostly convenient in the surface superconductivity regime, i.e., we assume that the applied field $ \hex $ is of order $ \kappa^2 $, i.e.,
\beq
	\hex = b \kappa^2,
\eeq
for some $ 0 < b = \OO(1) $, and set
\beq
	\varepsilon := b^{-\frac{1}{2}} \kappa^{-1}\ll 1 .
\eeq
We then study the asymptotic $\varepsilon \rightarrow 0$ of the minimization of the GL functional, which in the new units reads
\begin{equation}
	\label{eq: glf}
	\glfe [\psi, \textbf{A}]= \displaystyle\int_{\Omega} \diff\textbf{r}\; \bigg\{ \bigg| \left ( \nabla + i \frac{\textbf{A}}{\varepsilon ^ 2}\right)\psi \bigg|^2 -\frac{1}{2b\varepsilon ^2}(2|\psi|^2-|\psi|^4)	 \bigg\}+\frac{1}{\varepsilon ^4}\displaystyle\int_{\R^2} \diff\textbf{r}\; |\mbox{curl}\textbf{A} - 1|^2.
\end{equation}
We also set
\beq
	\glee := \displaystyle\min_{(\psi,\textbf{A})\in \gldom} \glfe[\psi,\textbf{A}],
\eeq
and denote by $ (\glm, \aavm) $ any minimizing pair.

In the new units the surface superconductivity regime coincides with the parameter region
\beq
	\label{eq: ss regime}
	1 < b < \theo^{-1},
\eeq
at least for smooth boundaries. The key features of the surface superconductivity phase are listed below:
\begin{itemize}
\item the GL order parameter is concentrated in a thin boundary layer of thickness $ \simeq \varepsilon $ and exponentially small in $ \eps $ in the bulk;
\item the induced magnetic field is very close to the applied one, i.e., $\mbox{curl}\textbf{A}\simeq 1$;
\item up to an appropriate choice of gauge and a mapping to boundary coordinates, the ground state of the theory can be approximated by an effective 1D energy functional in the direction normal to the boundary.
\end{itemize}

More precisely, in \cite{CR1}, it was proven\footnote{Note that, for the sake of clarity, we have changed the notation w.r.t. \cite{CR1} and denoted the 1D ground state energy $ \eones $, instead of $ \eoneo $ (the corresponding minimizer is then denoted by $ \fs $ instead of $ f_0 $). The reason is that the label $ 0 $ in the latter was actually referring to the dependence on the curvature, that we do not take into account in the present investigation.} that, if $\Omega $ is a smooth and simply connected domain, as $ \eps \to 0 $,
\beq
	\glee =\frac{|\partial \Omega | \eones}{\varepsilon}+\mathcal{O}(1),
\eeq
where $ \eones $ is the infimum of the functional
\begin{equation}
	\label{eq: fone}
	\fone_{\alpha}[f] := \displaystyle\int^{+\infty}_0 \mbox{dt} \left\{ |\partial_t f|^2 + (t+\alpha)^2 f^2 -\frac{1}{2b} (2f^2-f^4)\right\},
\end{equation}
both with respect to the real function $f$ and to the number $\alpha \in \R $, i.e.,
\beq
	\eones : = \inf_{\alpha \in \R} \inf_{f \in \onedom} \fone_{\alpha}[f],
\eeq
with $ \onedom = \lf\{ f \in H^1(\R; \R) \: | \: t f(t) \in L^2(\R) \ri\} $. The corresponding minimizing $ \alpha $ will be denoted by $ \as $, while $ \fs $ will stand for the function minimizing $ \eone_{\as} $.

The 1D functional \eqref{eq: fone} was known to provide a model problem for surface superconductivity since \cite{Pa}, but only in  \cite{CR1} (for disc-like samples) and \cite{CR2} (see also \cite{CR3} for further results in this direction) it was proven that $ \fs $ is a good approximation of $ |\glm| $. The main idea is indeed that any minimizing GL order parameter has the structure
\beq
	\label{eq: boundary behavior}
	\glm(\rv) \simeq \fs (\varepsilon t) \exp ( -i\as s ) \exp \lf\{ i \phi_\varepsilon  (s,t) \ri\}
\eeq
where $(s,t)$ are rescaled tubular coordinates (tangent and normal coordinates respectively, both rescaled by $ \eps $) defined in a neighborhood of $\partial \Omega$ (see, e.g., \cite[Appendix F]{FH1} and Section \ref{sec: coordinates}) and $\phi_\varepsilon$ is a gauge phase factor which plays an important role in the estimate of the magnetic field $ \aavm $.

Before stating our main result we specify the assumptions on the boundary domain: we consider a bounded domain $\Om\subset \R^2$ open and simply connected with Lipschitz boundary $\partial\Om$ such that the unit inward normal to the boundary, $ \nuv $, is well defined on $\partial \Om$ with the possible exception of a finite number of points -- the {\it corners} of $\Om$. We refer to the monographs \cite{Da,Gr} for a complete discussion of domains with non-smooth boundaries. More precisely we assume that the boundary $ \partial \Omega $ is a curvilinear polygon of class $C^{\infty}$ in the following sense (see also \cite[Definition 1.4.5.1]{Gr}):

	\begin{asum}[Piecewise smooth boundary]
		\label{asum: boundary 1}
		\mbox{}	\\
		Let $\Om$ a bounded open subset of $\R^2$, we assume that $ \partial \Omega $ is a smooth curvilinear polygon, i.e., for every $ \rv \in \Gamma$ there exists a neighborhood $ U $ of $ \rv $ and a map $ \phiv: U \to \R^2 $, such that
		\begin{enumerate}[(1)]
			\item $ \phiv $ is injective;
			\item $ \phiv $ together with $ \phiv^{-1} $ (defined from $ \phiv (U) $) are smooth;
			\item the region $\Om \cap U $ coincides with either $ \lf\{ \rv \in \Om \: | \: \lf( \phiv(\rv) \ri)_1 < 0 \ri\}$ or  $ \lf\{ \rv \in \Om \: | \: \lf( \phiv(\rv) \ri)_2 < 0 \ri\}$ or  $ \lf\{ \rv \in \Om \: | \: \lf( \phiv(\rv) \ri)_1 < 0, \lf( \phiv(\rv) \ri)_2 < 0 \ri\} $, where $ \lf( \phiv \ri)_j$ stands for the $ j-$th component of $ \phiv $.
		\end{enumerate}
	\end{asum}

	\begin{asum}[Boundary with corners]
		\label{asum: boundary 2}
		\mbox{}	\\
		We assume that the set  $\Sigma$ of corners of $ \partial \Omega $, i.e., the points where the normal $ \nuv $ does not exist, is non empty but finite and we denote by $\al_j$ the angle of the $j-$th corner (measured towards the interior).
	\end{asum}

	We are now able to formulate the main result proven in this article:

	\begin{teo}[GL asymptotics]
		\label{teo: GL asympt}
		\mbox{}	\\
		Let $\Om\subset\R^2$ be any bounded simply connected domain satisfying the above Assumptions \ref{asum: boundary 1} and \ref{asum: boundary 2}. Then for any fixed
		\beq
			\label{eq: b condition}
			 1< b < \theo^{-1},
		\eeq
		 as $\eps \to 0$, it holds
		\beq
			\label{eq: gle asympt}
			\gle = \frac{|\partial\Om| \eones}{\eps} + \OO(|\log\eps|^2),
		\eeq
		and
		\beq
			\label{eq: glm asympt}
			\lf\| \lf| \glm(\rv) \ri|^2 - \fs^2(\dist(\rv, \partial\Omega)/\eps) \ri\|_{L^2(\Omega)} = \OO(\eps |\log\eps|) \ll \lf\| \fs^2(\dist(\rv, \partial\Omega)/\eps) \ri\|_{L^2(\Omega)}.
		\eeq
	\end{teo}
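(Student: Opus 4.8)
\emph{Upper bound.} The plan is to match an upper and a lower bound for $\gle$ up to $\OO(|\log\eps|^2)$ and then read off \eqref{eq: glm asympt} from the resulting energy excess. For the upper bound I would use the trial pair of \cite{CR1}: a vector potential $\atrial$ with $\curl\atrial\equiv1$, so that the self-field term in \eqref{eq: glf} vanishes, and, in the rescaled tubular coordinates $(s,t)$ around each regular arc of $\partial\Om$, $\trial\simeq\fs(t)\,e^{-i\as s}\,e^{i\phi_\eps}$ with $\phi_\eps$ the associated gauge phase. Since $\Om$ is simply connected these local choices glue to a globally defined configuration on the set where the tubular chart is non-degenerate, i.e. at distance $\gtrsim\eps|\log\eps|$ from $\Sigma$; multiplying $\trial$ by a smooth cutoff vanishing in balls of radius $\sim\eps|\log\eps|^{2}$ about the corners, one discards only $\OO(\eps|\log\eps|^{2})$ of boundary length, hence loses $\OO(|\log\eps|^{2})$ in the leading term, while the cutoff gradient costs $o(1)$ and the curvature and field corrections are $\OO(1)$ as in \cite{CR1}. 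This gives $\gle\le|\partial\Om|\eones\eps^{-1}+\OO(|\log\eps|^{2})$.

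\emph{Localization and lower bound.} For the lower bound I would first establish, via the usual Agmon argument (Kato's inequality for $|\glm|$, the first equation in \eqref{eq: GL eqs}, and $|\glm|\le1$), that $|\glm|$ and the energy density are $\OO(\eps^{\infty})$ outside the layer $\{\dist(\rv,\partial\Om)\le C\eps|\log\eps|\}$: for $b>1$ the interior magnetic kinetic energy dominates the potential exactly as in the smooth case, and this is insensitive to the corners (which lie on $\partial\Om$). I would also record that $\curl\aavm\to1$ uniformly, from the second GL equation and elliptic estimates as in \cite{CR1,CR2}, which lets one replace $\aavm$ by a unit-curl potential in local computations. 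Since the tangential variable is approximately translation invariant it then suffices to localize near $\Sigma$: with cutoffs $\chi_0$ supported on the regular arcs and $\chi_j$ supported in a ball $\cellj$ of radius $\sim\eps|\log\eps|$ about the $j$-th corner, $\sum_j\chi_j^2=1$, the IMS formula together with the positivity of the self-field energy gives $\gle\ge\sum_j\glfe[\chi_j\glm,\aavm]-o(1)$. On the regular part, the arc-by-arc one-dimensional reduction of \cite{CR1} yields $\glfe[\chi_0\glm,\aavm]\ge|\partial\Om|\eones\eps^{-1}-\OO(|\log\eps|^{2})$ — losing arc length only improves this since $\eones<0$, the curvature corrections integrate to $\OO(1)$, and each arc endpoint costs $\OO(1)$. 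On each $\cellj$, bounding the magnetic kinetic energy below by the ground-state energy $(\mu(\alpha_j)+o(1))\eps^{-2}$ of $-(\nabla+i\aavm/\eps^{2})^{2}$ on the conical sector, one gets $\glfe[\chi_j\glm,\aavm]\ge(\mu(\alpha_j)-b^{-1}+o(1))\eps^{-2}\|\glm\|^{2}_{L^{2}(\cellj)}\ge-\OO(|\log\eps|^{2})$, since $|\mu(\alpha_j)-b^{-1}|=\OO(1)$ regardless of sign and the part of $\cellj$ inside the layer has area $\OO(\eps\cdot\eps|\log\eps|)$; note that no comparison of $\mu(\alpha_j)$ with $\theo$ enters here. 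Summing proves $\gle\ge|\partial\Om|\eones\eps^{-1}-\OO(|\log\eps|^{2})$, hence \eqref{eq: gle asympt}.

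\emph{Density estimate.} Feeding the now-known value of $\gle$ back into the cell decomposition, the total energy excess over $|\partial\Om|\eones\eps^{-1}$ is $\OO(|\log\eps|^{2})$; on the regular arcs the coercivity of $\fone_\as$ about its minimizer $\fs$ (\cite{CR1}) turns this into $\eps^{-1}\int_{\{\dist\le C\eps|\log\eps|\}\setminus\bigcup_j\cellj}\big||\glm|^{2}-\fs^{2}(\dist(\cdot,\partial\Om)/\eps)\big|^{2}=\OO(|\log\eps|^{2})$, i.e. this integral is $\OO(\eps^{2}|\log\eps|^{2})$. Off the layer both $|\glm|^{2}$ and $\fs^{2}(\dist/\eps)$ are exponentially small, and the finitely many $\cellj$ have total area $\OO(\eps^{2}|\log\eps|)$, so they add only $\OO(\eps^{2}|\log\eps|)$ to the same integral. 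Taking square roots, $\||\glm|^{2}-\fs^{2}(\dist(\cdot,\partial\Om)/\eps)\|_{L^{2}(\Om)}=\OO(\eps|\log\eps|)$, whereas $\|\fs^{2}(\dist(\cdot,\partial\Om)/\eps)\|_{L^{2}(\Om)}=\OO(\eps^{1/2})$ because $\fs^{2}$ is $\OO(1)$ and supported in a layer of area $\OO(\eps)$; this gives the stated inequality.

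\emph{Main obstacle.} The genuinely new difficulty, absent in the smooth case, is the bookkeeping in the corner regions in the lower bound: the excised neighborhoods of $\Sigma$ must be small enough (radius $\sim\eps|\log\eps|$) that the $L^2$-mass and energy they carry are only $\OO(|\log\eps|^{2})$, yet large enough that the tubular reduction of \cite{CR1} remains valid on the complementary arcs all the way to their endpoints. Reconciling these competing requirements — in practice by a further, coordinate-degenerating reduction on the annular region interpolating the two scales, whose radial integrals are precisely what produce the $|\log\eps|^{2}$ loss — is the technical heart of the argument; the secondary point is the magnetic field estimate $\curl\aavm\approx1$ up to the non-smooth part of $\partial\Om$, needed to make the local reductions meaningful there, but this is handled essentially as for smooth domains.
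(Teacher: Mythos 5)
Your overall architecture coincides with the paper's: Agmon localization to a layer of width $\OO(\eps|\log\eps|)$, excision of corner neighborhoods of tangential size $\OO(\eps|\log\eps|)$ whose energy content is controlled by their area, a trial state $\chi(s)\fs(t)e^{-i\as s}e^{i\phi}$ vanishing near the corners for the upper bound, and a reduction to the effective 1D functional on the remaining smooth arcs for the lower bound, with \eqref{eq: glm asympt} read off from the leftover quartic term. One remark on the corner cells: the paper does not need the sector ground-state energy $\mu(\alpha_j)$ at all --- it simply discards the (positive) kinetic and quartic terms there and bounds the single negative term by $C\eps^{-2}|\cellj| = \OO(|\log\eps|^2)$ using $|\glm|\le 1$; your spectral bound works but imports an object whose value is irrelevant, as you yourself note.

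The genuine gap is in the sentence ``the arc-by-arc one-dimensional reduction of \cite{CR1} yields \dots each arc endpoint costs $\OO(1)$''. That reduction hinges on an integration by parts of the current term: after the splitting $\psi = \fs(t)\,u(s,t)\,e^{-i\as s}$ one trades $-2\int (t+\as)\fs^2\, j_s[u]$ for $\int F(t)\,\partial_t j_s[u]$ and then integrates by parts in $s$; on a closed smooth boundary the $s$-boundary terms cancel by periodicity. On a disconnected union of arcs they do not: each endpoint $s = s_j \pm c_1|\log\eps|$ produces a term of the form $\int_0^{c_0|\log\eps|}\diff t\, F(t)\,\mathrm{Im}\lf(u^*\partial_t u\ri)$ evaluated on that line, i.e.\ a trace of a quantity for which one only has $L^2$ control over the whole layer, not pointwise-in-$s$ control; there is no a priori reason these terms are $\OO(1)$, or even $\OO(|\log\eps|^2)$. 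This is precisely the new difficulty created by the corners, and the paper resolves it by first multiplying the order parameter by a partition of unity that forces $u$ to vanish identically on the inner and outer boundaries of $\acut$ (Lemma \ref{lem: lb restriction}), at an IMS cost of $\OO(|\log\eps|^2)$, after which the boundary terms vanish exactly. Your ``Main obstacle'' paragraph correctly senses that the endpoint bookkeeping is the crux, but attributes the $|\log\eps|^2$ loss to a different and unspecified interpolation mechanism; as written, the lower bound on the regular part is asserted rather than proved.
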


	\begin{remark}[Order parameter asymptotics]
		\mbox{}	\\
		The convergence stated in \eqref{eq: glm asympt} implies that $ \fs $ provides a good approximation of $ |\glm| $ in the boundary layer, i.e., for $ \dist(\rv, \partial \Omega) \lesssim \eps |\log\eps| $. At larger distance from the boundary both functions are indeed exponentially small in $ \eps $ and their mass consequently very small. Note also that, if the condition \eqref{eq: b condition} is satisfied,
		\bdm
			\lf\| \fs^2(\dist(\rv, \partial\Omega)/\eps) \ri\|_{L^2(\Omega)} \geq c \sqrt{\eps},
		\edm
		for some $ c > 0 $, as it immediately follows by observing that $ \fs(t) $ is independent of $ \eps $ and non-identically zero.
	\end{remark}

	\begin{remark}[Limiting regimes]
		\mbox{}	\\
		We explicitly chose not to address the limiting cases $ b \to 1^+$ or $b \to \theo^{-1}$. In the former case an adaptation of the method might work (see also \cite[Remark 2.1]{CR1}), while in the latter the analysis is made much more complicate because of the interplay between corner and boundary confinements. In particular a much more detailed knowledge of the behavior of the linear problem, i.e., the ground state energy of the magnetic Schr\"{o}dinger operator in a sector of angle $ \alpha $, is needed, e.g., a proof of the conjecture discussed in \cite{B-NF}.
	\end{remark}

	The rest of the paper is devoted to the proof of the above result. Before proceeding however we briefly comment on possible future perspectives. Inspired by the analysis contained in \cite{CR2} one can indeed aim at capturing higher order corrections to the energy asymptotics \eqref{eq: gle asympt} and isolate the curvature corrections to the energy. At this level the presence of corners might give rise to an additional contribution to the energy of the same order of the curvature corrections. Compared to the smooth boundary case studied in \cite{CR2} however, a proof of an $ L^{\infty}$ estimate of $ |\glm| $ might be harder to obtain, due to the presence of boundary singularities. We plan to address this questions in a future work anyway.

\medskip

\noindent\textbf{Acknowledgments.} The authors acknowledge the support of MIUR through the FIR grant 2013 ``Condensed Matter in Mathematical Physics (Cond-Math)'' (code RBFR13WAET). M.C. also thanks \textsc{N. Rougerie} for enlightening discussions on the topic of the present paper.

\section{Proofs}

The strategy of the proof is very similar to the arguments contained in \cite{CR1}. We sketch here the main steps.

The preliminary step, i.e., the {\it restriction to the boundary layer} (Section \ref{sec: restriction}), is standard and described in details, e.g., in \cite[Section 14.4]{FH1}. The final outcome of this step is a functional restricted to a layer of width $ \OO(\eps|\log\eps|) $ along the boundary. The main ingredients are as usual Agmon estimates.

Another common step to both the upper and lower bound proofs, although applied to different magnetic potentials, is  the {\it replacement of the magnetic field} (Section \ref{sec: replacement}), as, e.g., in \cite[Appendix F]{FH1}. The presence of corners however calls for suitable modifications, since this step is usually done by exploiting tubular coordinates, which are not defined closed to the boundary singularities.  As we are going to see however, such a replacement is needed only in the smooth part of the boundary layer, where it can be done in a rather standard way by making a special choice of the gauge. In fact the only required modification is an adapted definition of the gauge phase close to the corners.

The energy {\it upper bound} (Section \ref{sec: upper bound}) is then trivially obtained by testing the energy on a trial configuration with magnetic field equal to the external one and with order parameter $ \psi $ reproducing \eqref{eq: boundary behavior}. The {\it lower bound} proof (completed in Section \ref{sec: lower bound}) is more involved and requires few more steps.

 In order to extract the 1D effective energy from the GL asymptotics, {\it boundary coordinates} (Section \ref{sec: coordinates}) are needed, since, according to the expected behavior \eqref{eq: boundary behavior}, the 1D energy is associated to the variation of $ |\glm| $ along the normal to the boundary, while $ |\glm| $ is approximately constant in the transversal direction. At this stage the non-smoothness of the boundary really affects the proof, because the use of tubular coordinates is clearly prevented near the corners. By a simple a priori estimate however we  show that one can {\it drop the energy around corners}.

We are thus left with the energy contributions of the smooth pieces of the boundary layer. There we can pass to boundary coordinates and use the same trick, i.e., a suitable {\it integration by parts}, involved in the proofs of our earlier results \cite{CR1,CR2} and inspired by other works on rotating condensates (see, e.g., \cite{CRv, CRY, CPRY1, CPRY2, CPRY3}). Since the region where we perform the integration by parts is not connected however, a naive application of the trick would generate unwanted boundary terms and therefore we will slightly modify the order parameter by introducing a partition of unity around the corners. 

Finally the key estimate to complete the lower bound proof is the {\it positivity of the cost function}, i.e., a pointwise estimate of an explicit 1D function depending on the effective 1D profile. This step is precisely the same as in \cite{CR1} and is the only point in the proof where the condition $ 1 < b < \theo^{-1} $ comes explicitly into play. However the assumption $ b > 1 $ is required to apply Agmon estimates in the preliminaries, while the condition $ b < \theo^{-1} $ is needed in order to ensure that the 1D minimizing profile is non-trivial.

We start the proof discussion by first recalling some properties of the 1D effective functional \eqref{eq: fone}, which is going to provide the effective energy in the surface superconductivity regime.

\subsection{Analysis of the effective $1D$ model}

Given the functional \eqref{eq: fone}, we denote by $f_{\alpha}$ any minimizer for $  \alpha \in \R $ and by $ \eone_{\alpha} $ the corresponding ground state energy,
$$
	\eone_{\alpha} := \inf_{f  \in H^1(\R^+)} \fone_{\al}[f],
$$
with the convention that $ \eones = \eone_{\as} = \inf_{\alpha \in \R} \eone_{\al} $ and $ \fs = f_{\as} $.

We recall that the minimizer $ \fs $ is non-trivial if and only if $ b < \theo^{-1} $ \cite[Proposition 3.2]{CR1} and, in this case, it is a unique positive function, which is also monotonically decreasing and solves the variational equation \cite[Proposition 3.1]{CR1}
\beq
	\label{eq: var eq fs}
	 -\fs'' +(t+ \as)^2 \fs = \tx\frac{1}{b}(1-\fs^2) \fs,
\eeq
with boundary condition $ \fs^{\prime}(0) = 0 $. The decay of $ \fs $ can be estimated \cite[Proposition 3.3]{CR1}: for any $ b < \theo^{-1} $, there exist two constants $ 0 < c,C < \infty $ such that
\beq
	\label{eq: fs decay}
	c\exp\left\{- \tx\frac{1}{2}(t+\sqrt{2})^2\ri\}\leq \fs(t)\leq C \exp\left\{- \tx\frac{1}{2}	(t+\alpha)^2\ri\}
\eeq
for any $t \in \R^+$. As a direct consequence 
\beq
	\fs(t) = \OO(\eps^{\infty}), \qquad	\mbox{for } t \geq c_0|\log\eps|,
\eeq
for any constant $ c_0 > 0 $.

The {\it potential function} associated to $ \fs $ is
\beq
	\label{eq: F}
	F(t) :=  2\displaystyle\int_0^t \diff \xi \; \fs^2(\xi) (\xi+\as).
\eeq
It can be shown that it is negative (in fact $ \as < 0  $) and vanishes both at $ t = 0  $ and $ t = \infty $. In this second case the vanishing of $ F $ is implied by the optimality of $ \as $. The {\it cost function} that will naturally appear in our investigation is
\beq
	\label{eq: K}
	K(t):= \fs^2(t) + F(t).
\eeq
In \cite[Proposition 3.4]{CR1} it is proven that
\beq
	\label{eq: K positive}
	K(t) \geq 0,		\qquad		\mbox{for any } t \in \R^+.
\eeq

\subsection{Restriction to the boundary layer}
\label{sec: restriction}

One of the key features of surface superconductivity is that the critical behavior survives only close to the boundary of the sample, due to the penetration of the external magnetic field. Mathematically this can be formulated by showing that the order parameter decays exponentially far from the boundary, which is the content of Agmon estimates. The presence of corners does not influence such a behavior.

We thus have that, if $ b > 1 $, any configuration $ (\psi,\aav) $ solving the GL equations satisfies the estimate
\beq
	\label{eq: agmon}
	\displaystyle\int_\Om \diff\textbf{r}\;\exp\lf\{\tx\frac{C_A \: \dist (\textbf{r},\partial\Om)}{\eps}\ri\} \lf\{|\psi|^2+\eps^2 \lf|\lf(\nabla+i \tx\frac{\textbf{A}}{\eps^2}\ri)\psi\ri|^2\ri\} \; \leq \displaystyle\int_{\{\dist(\textbf{r},\partial\Om)\leq \eps\}} \diff \textbf{r}\; |\psi|^2=\mathcal{O}(\eps),
\eeq
where $C_A$ is a fixed constant and in the last estimate we have used the bound $ \lf\| \psi \ri\|_{\infty} \leq 1  $, which remains true also in the case of corners \cite[Section 15.3.1]{FH1}. For the proof of normal Agmon estimates we refer to \cite[Theorem 4.4]{B-NF}. Note that because of the exploding exponential factor, \eqref{eq: agmon} implies that
\beq
	\int_{\dist(\rv, \partial \Omega)\geq c_0 \eps |\log\eps|} \diff \rv \: \lf\{ |\psi|^2+\eps^2 \lf|\lf(\nabla+i \tx\frac{\textbf{A}}{\eps^2}\ri)\psi\ri|^2 \ri\} = \OO(\eps^{c_0 C_A+1}),
\eeq
which can be made smaller than any power of $ \eps $ by taking the constant $ c_0 $ arbitrarily large. Thanks to this fact and the analogous estimate on the gradient of $ \psi $, one can easily drop from the energy the contribution of the region further from $ \partial \Omega $ than $ c_0 \eps |\log\eps| $ and, if we set
\beq
	\label{eq: bd layer}
	\ann : = \lf\{ \rv \in \Omega \: | \: \dist\lf(\rv, \partial \Omega\ri) \leq c_0 \eps |\log\eps| \ri\},
\eeq
it holds
\beq
	\label{eq: restriction}
	\gle = \glfet [\glm, \aavm] + \OO(\eps^{\infty}),
\eeq
where $ \glfet $ stands for
\beq
	\label{eq: glfet}
	\glfet[\psi,\aav] = \displaystyle\int_{\ann} \diff \rv \; \bigg\{ \bigg| \left ( \nabla + i \frac{\aav}{\varepsilon ^ 2}\right)\psi \bigg|^2 - \frac{1}{2b\varepsilon ^2}(2|\psi|^2-|\psi|^4)	 \bigg\}+\frac{1}{\varepsilon ^4}\displaystyle\int_{\Omega} \diff\textbf{r}\; |\mbox{curl}\textbf{A} - 1|^2.
\eeq

Before proceeding further we state some inequalities which will prove to be helpful in the rest of the paper (for a proof see, e.g., \cite[Section 15.3]{FH1} and reference therein): for any weak solution to the GL equations \eqref{eq: GL eqs}
\beqn
	\lf\|\psi\ri\|_{L^\infty (\Om)} & \leq & 1,	\label{eq: est infty}	\\
	\lf\| \lf(\nabla + i\tx\frac{\aav}{\eps^2} \ri) \psi \ri\|_{L^2(\Om)} & = & \OO(\eps^{-1/2}),	\label{eq: est gradient}	\\
	\lf\| \curl \aav - 1 \ri\|_{L^2(\R^2)} & = & \OO(\eps^{7/4}).	\label{eq: est curl a}
\eeqn

\subsection{Boundary coordinates}
\label{sec: coordinates}

The rest of the proof requires the definition of suitable boundary coordinates $ (\sigma,\tau) $, e.g., to isolate the singular regions around the corners. As in \cite[Appendix F]{FH1} we introduce a parametrization of the boundary $ \partial \Omega $ denoted by $ \gav(\sigma) $, $ \sigma \in [0,\partial\Omega) $, which is piecewise smooth. At any point along the boundary, with the exception of corners $ \Sigma $, the inward normal to the boundary $ \nuv(\sigma) $ is well defined and smooth.

\begin{figure}[ht!]
\begin{center}
\includegraphics[scale=0.7]{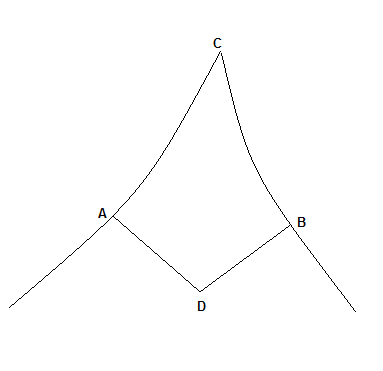}
\caption{Region where boundary coordinates might be ill defined.}
\label{fig: cell 1}
\end{center}
\end{figure}

The following map however
\beq
	\rv(\sigma,\tau) = \gav(\sigma) + \tau \nuv(\sigma),
\eeq
with $ \tau = \dist(\rv, \partial \Omega) $, defines a diffeomorphism only in a thin enough strip along $ \partial \Omega $ and far enough from $ \Sigma $, e.g., in
\bdm
	\lf\{ \rv \: | \: \dist\lf(\rv, \partial \Omega\ri) \leq \tau_0, \dist\lf(\rv, \Sigma\ri) \geq \tau_0^{\prime} \ri\},
\edm
with $ \tau_0 $ small enough and $ \tau_0^{\prime} $ suitably chosen (of the same order). In that region we can also define the curvature $ \varsigma(\sigma) $ as
\bdm
	\gav^{\prime\prime}(\sigma) = \varsigma(\sigma) \nuv(\sigma).
\edm

In order to handle the singularities at corners we need to define cells covering the region where tubular coordinates are ill defined. Given the normal width of the boundary layer equal to $ c_0 \eps |\log\eps| $, as required to apply Agmon estimates in the complement, it is obviously possible to find a corner cell of the form given in Fig. \ref{fig: cell 1}. Since the length of inner boundaries is $ c_0 |\log\eps |$, the distances between the points $ A $ and $ B $ and the corner $ C $ along the boundary can be chosen at most of order $ \OO(\eps|\log\eps|) $.

\begin{figure}[ht!]
\begin{center}
\includegraphics[scale=0.7]{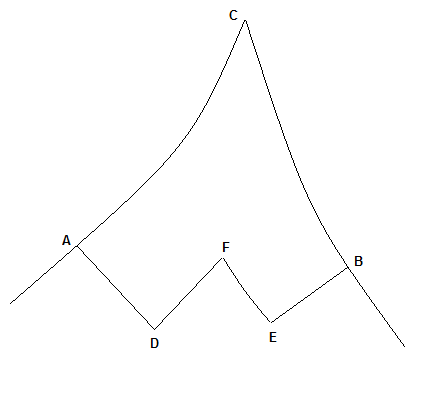}
\caption{Cell $ \cellj $.}
\label{fig: cell 2}
\end{center}
\end{figure} 

 For the sake of concreteness we fix those lengths (the curves $ AC $ and $ BC $ in Fig. \ref{fig: cell 2}) both equal to
\beq
	\sigma_j - \sigma_A  = \sigma_B - \sigma_j = c_1 \eps|\log\eps|,	
\eeq
where the constant $c_1 $ is chosen large enough in such a way that in $ (\cup_j \cellj)^c $ tubular coordinates $ (\sigma,\tau) $ are well defined. The inner boundary of the cell is given by the two lines ($ AD $ and $ BE $ in Fig. \ref{fig: cell 2})
\beq
 	 \lf\{ \rv(\sigma,\tau) \: | \: \sigma = \sigma_j \pm c_1\eps|\log\eps|, t \in [0, c_0 \eps|\log\eps|] \ri\},
\eeq
with possibly two segments ($ DF$ and $ EF $ in Fig. \ref{fig: cell 2}) belonging to the border of $ \ann $ of the form
\beq
	\lf\{ \rv(\sigma,\tau) \: | \: \lf| \sigma - \sigma_j \ri| \in \lf[\delta_{\pm}\eps|\log\eps|, c_1\eps|\log\eps| \ri],  t = c_0 \eps|\log\eps|] \ri\},
\eeq
for some $ 0 < \delta_{\pm} < c_1 $. The final shape of the cell is described in Fig. \ref{fig: cell 2} for an acute angle. The definition requires no adaptation however for obtuse angles.

The most important property of the corner cells is that they carry a little amount of energy, which allow us to discard them in the estimate of both the upper and lower bounds to the GL energy. This is directly implied by the smallness of those cells, whose area is $ \OO(\eps^2 |\log\eps|^2) $.  At an heuristic level indeed the energy density is of order $ \eps^{-2} $, at least close to $ \partial \Omega $, and therefore the energy contained in $ \cellj $ is expected to be $ \OO(|\log\eps|^2) $, i.e., of the same order to the error term in \eqref{eq: gle asympt}.

In the rest of the proof we will also use the rescaled boundary coordinates defined in terms of $ (\sigma, \tau) $ as
\beq
	s = \frac{\sigma}{\eps},	 \qquad	t = \frac{\tau}{\eps},
\eeq
so that, with some abuse of notation\footnote{For the sake of simplicity in the notation, we do not change symbol to denote a given set in the rescaled variables, e.g., $ \ann $ will stand for the boundary layer, either in the $ \rv $ or in the $ (s,t) $ variables.},
\beq
	\ann = \lf\{ (s,t) \: | \: s \in \lf[0, \tx\frac{|\partial \Omega|}{\eps}\ri], t \in [0, c_0|\log\eps|] \ri\}.
\eeq
The boundary layer without the corner cells is called $ \acut $, i.e.,
\beq
	\label{eq: acut}
	\acut : = \ann \setminus \bigcup_{j = 1}^N \cellj.
\eeq
We also denote by $ \nv(s) : = \nuv(\eps s) $ and $ k(s) : = \varsigma(\eps s) $ the normal to $ \partial \Omega  $ and the curvature in the new coordinates respectively.
	
%For the sake of simplicity we will commit a small abuse of notation and use the same symbol to denote a given set in the rescaled variables, e.g.,

\subsection{Replacement of the magnetic field}
\label{sec: replacement}

In the surface superconductivity regime the induced magnetic field is actually very close to the applied one. This is typical of this regime and is also the reason why the last term in the GL functional \eqref{eq: glf} is never taken into account, since its contribution is always subleading. Hence the induced field is almost uniform and its strength is approximately $ 1$. There are however many magnetic potentials $ \aav $ generating such a field and it is useful to exploit gauge invariance to select the most convenient one. Here we discuss how it can be done. 

We pick a magnetic potential $ \aav $ such that 
\beq
	\label{eq: coulomb gauge}
	\nabla \cdot \aav = 0,	\qquad		\mbox{in } \Omega,
\eeq
i.e., in the Coulomb gauge, and \eqref{eq: est curl a} holds true,
\beq
	\label{eq: a assumption}
	\lf\| \curl \aav - 1 \ri\|_{L^2(\R^2)} = \OO(\eps^2).
\eeq
This in particular applies to the minimizer $ \aavm $ but also the magnetic potential used in the upper bound (see Section \ref{sec: upper bound} below) satisfies the condition. 

In smooth domains \eqref{eq: coulomb gauge} is accompanied by the boundary condition $ \aav \cdot \nv = 0 $ on $ \partial \Omega $. This is clearly not possible in presence of corners, because of the jumps of $ \nv $: let us stress that at corners $ \nv $ is discontinuous but it remains uniformly bounded all over the boundary, so that, for instance,
\bdm
	\int_{0}^{\frac{|\partial \Omega|}{\eps}} \diff s \: \aav(\rv(\eps s,0)) \cdot \nv(s) = 0,
\edm
i.e., the function $ \aav \cdot \nv $ is integrable. Moreover Stokes formula and elliptic regularity (see, e.g., \cite{Gr}) implies that the boundary condition is in fact satisfied in trace sense, i.e., almost everywhere. In particular it must hold true far from corners, where the normal to the boundary is well defined, e.g.,
\beq
	\aav(\rv(\eps s,0)) \cdot \nv(s) = 0,	\qquad		\mbox{for any } |s-s_j| \geq c_1 |\log\eps|.
\eeq

Now the key idea for the replacement, described, e.g., in \cite[Appendix F]{FH1} (see also \cite[Section 4.1]{CR1} and \cite[Proof of Lemma 4]{CR2}), is that, since at the boundary the normal component of $ \aav $ vanishes, close enough to $ \partial \Omega $, it is possible to find a differentiable function $ \phi_{\aav}(s,t) $ such that $ \aav - \nabla \phi_{\aav}(s,t) = f(s,t) {\bf e}_s $, i.e., the field remains purely tangential close enough to $ \partial \Omega $. In addition since $ \curl \aav $ is approximately $ 1 $, the function $ f(s,t) $ is close to $ - t + o(1) $. Using a gauge transformation built on $ \phi_{\aav} $, one can then replace $ \aav $ with a magnetic potential which is of the form described above.

We do not repeat all the details of the procedure but we only comment on the modifications induced by the presence of corners. The first step, i.e., the gauge transformation is essentially the same as for smooth domains: thanks to the integrability of both $ \gav^{\prime}(s) $ and $ \aav(\rv(\eps s, \eps t)) $ along $ \partial \Omega $, which is guaranteed in the first case by the boundedness of $ \gav^{\prime}(s) $ and in the second one by Sobolev trace theorem, which gives $ \aav \in H^{3/2}(\partial \Omega; \R^2) $, we can set for any $ (s,t) \in \acut $
\beq
	\label{eq: phase}
	\phi_\aav (s,t) := - \frac{1}{\eps} \int_0^t \diff\eta \: \aav(\rv(\eps s, \eps \eta)) \cdot \nv(s)  + \frac{1}{\eps} \int _0^{s} \diff \xi \;  \aav(\rv(\eps\xi, 0)) \cdot \gav^{\prime}(\eps s) - \eps \delta_\eps s,
\eeq
with\footnote{We denote by $ \lfloor \: \cdot \: \rfloor $ the integer part. Note the missing factors $ 2\pi $ in the definitions \cite[Eq. (4.8)]{CR1} and \cite[Eq. (5.4)]{CR2}.}
\beq 
	\delta_\eps (s,t) = \frac{1}{\eps^2|\partial\Om|} \int_{\Om} \diff \rv \; \curl \aav - \frac{2\pi}{|\partial\Om |}  \lf \lfloor  \frac{1}{ 2 \pi \eps^2} \int_{\Om } \diff \rv \; \curl \aav \ri\rfloor.
\eeq
Note that the second term in the expression above is well defined even if $ s $ gets close to the corner points $ \Sigma $, because $ \gav^{\prime} $ is by assumption an integrable function. In $ \acut^c $ on the other hand the phase $ \phi_\aav $ can be continued in a rather arbitrary way with the only condition that $ \phi_{\aav} \in  H^1(\ann) $, e.g., we can set for $ (s,t) \in \acut^c $
\beq
	\label{eq: phase corners}
	\phi_\aav (s,t) := - \frac{1}{\eps} \int_0^t \diff\eta \: \chi(s) \aav(\rv(\eps s, \eps \eta)) \cdot \nv(s)  + \frac{1}{\eps} \int _0^{s} \diff \xi \;  \aav(\rv(\eps\xi, 0)) \cdot \gav^{\prime}(\eps s) - \eps \delta_\eps s,
\eeq
where the cut-off function $ \chi $ vanishes at distance of order $ 1 $ from the boundaries $ s_j \pm c_1 |\log\eps| $. It is easy to verify that for any $ (s,t) \in \ann $, one has 
\bdm
	\phi_{\aav}(s + n |\partial \Omega|, t) = \phi_{\aav}(s,t) + 2 \pi n,	\qquad \mbox{for any } n \in \Z.
\edm

 The change of coordinates $ \rv \to (\eps s, \eps t) $ in $ \acut  $ and the simultaneous gauge transformation then yields for any $ \Psi \in H^1(\ann) $
\bml{
	\glf_{\eps,\acut}[\Psi,\aav] = \lf(1 + \OO(\eps|\log\eps|) \ri) \int_{\acut} \diff s\diff t \; \lf\{ |\partial_t \psi |^2 + \lf| (\partial_s + i   a(s,t)) \psi \ri|^2 \ri.	\\
	\lf. - \tx\frac{1}{2b}[2|  \psi|^2-| \psi|^4] \ri\}	+ \frac{1}{\eps^4} \int_{\Omega} \diff \rv \: \lf| \curl \aav - 1 \ri|^2,
}
where the prefactor $ 1 + \OO(\eps|\log\eps|) $ is due to an estimate of the jacobian $ 1 - \eps k(s) t $ of the change of coordinates $ \rv \to (s,t) $ induced by the diffeomorphism $ \rv(\eps s, \eps t) $, where $ k(s) $ is the curvature of the boundary in the rescaled coordinates. Of course $ k(s) $ is not defined at corners but admits left and right values and 
\beq
	\lf\| k \ri\|_{L^{\infty}(\partial \Omega)} \leq C.
\eeq
Moreover $ \psi $  in the expression above is 
\beq
	\psi(s,t) : = \Psi(\rv(\eps s, \eps t)) e^{-i \phi_{\aav}(s,t)}.
\eeq
Finally
\beq
	\label{eq: magn pot a}
	a_{\aav}(s,t) = (1-\eps k(s)t) \frac{\gav^{\prime}(\eps s) \cdot \aav(\rv(\eps s,\eps t))}{\eps} - \partial_ s \phi_\aav.
\eeq
Note that we have left untouched the last term of the GL functional, because it will be treated in different ways in the upper and lower bound proofs. 

We also stress that so far the a priori assumption \eqref{eq: a assumption} has not been used. It is indeed the main ingredient of the estimate of $ a(s,t) $.

 	\begin{lem}
 		\label{lem: a est}
 		\mbox{}	\\
 		Let $ \aav  $ be such that \eqref{eq: coulomb gauge} and \eqref{eq: a assumption} are satisfied, then for any $ c_0 > 0 $,
 		\beq
 			\label{eq: tilde a est} 
 			\lf\| {a}_{\aav}(s,t)  + t \ri\|_{L^2(\acut)} = \OO(\eps|\log\eps|).
		\eeq
 	\end{lem}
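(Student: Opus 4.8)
The plan is to turn the defining formula \eqref{eq: magn pot a} into an explicit integral expression for $a_{\aav}$ along the inward normal and then control it by the a priori bound \eqref{eq: a assumption} on $\curl\aav-1$. Since $\acut$ is by construction at distance $\gtrsim\eps|\log\eps|$ from the corner set $\Sigma$, both the tubular coordinates $(s,t)$ and the gauge phase $\phi_\aav$ of \eqref{eq: phase} are well defined and smooth on $\acut$, so one may write $a_{\aav}(s,t)=a_{\aav}(s,0)+\int_0^t\partial_\eta a_{\aav}(s,\eta)\,\diff\eta$. I would first identify the boundary value: differentiating \eqref{eq: phase} in $s$ at $t=0$, using that on $\acut$ the Coulomb-gauge potential satisfies $\aav(\rv(\eps s,0))\cdot\nv(s)=0$, and comparing with \eqref{eq: magn pot a}, the contributions proportional to $\gav'(\eps s)\cdot\aav(\rv(\eps s,0))$ cancel and one is left with the spatial constant $a_{\aav}(s,0)=\eps\,\delta_\eps$. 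By the very definition of $\delta_\eps$ (the floor is chosen so that $e^{i\phi_\aav}$ is single–valued around $\partial\Omega$), $0\le\delta_\eps<2\pi/|\partial\Omega|$; here \eqref{eq: a assumption} is used only to know $\int_\Omega\curl\aav=|\Omega|+\OO(\eps^2)$, so $\delta_\eps$ is genuinely bounded and $a_{\aav}(\cdot,0)=\OO(\eps)$.

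The heart of the argument is the computation of $\partial_t a_{\aav}$. The phase $\phi_\aav$ was tailored so that $\partial_t\phi_\aav$ cancels exactly the normal component $\eps^{-2}\widetilde A_t$ of the pulled-back $1$-form $\widetilde A_s\,\diff s+\widetilde A_t\,\diff t$ of $\aav$, with $\widetilde A_s=\eps(1-\eps k(s)t)\,\gav'(\eps s)\cdot\aav(\rv(\eps s,\eps t))$ and $\widetilde A_t=\eps\,\nv(s)\cdot\aav(\rv(\eps s,\eps t))$, so that the first summand of \eqref{eq: magn pot a} is precisely $\eps^{-2}\widetilde A_s$. Naturality of the exterior derivative under $(s,t)\mapsto\rv(\eps s,\eps t)$ gives
\beq
 \partial_s\widetilde A_t-\partial_t\widetilde A_s=\eps^2\,(1-\eps k(s)t)\,(\curl\aav)(\rv(\eps s,\eps t)),
\eeq
and feeding this, together with $\partial_t\phi_\aav=-\eps^{-2}\widetilde A_t$, into $\partial_t a_{\aav}=\eps^{-2}\partial_t\widetilde A_s+\partial_t\partial_s\phi_\aav$ collapses every $\aav$-dependent term except one, producing the clean identity $\partial_t a_{\aav}(s,t)=-(1-\eps k(s)t)\,(\curl\aav)(\rv(\eps s,\eps t))$. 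Integrating in $\eta$ and splitting $\curl\aav=1+(\curl\aav-1)$ then gives
\beq
 a_{\aav}(s,t)+t=\eps\,\delta_\eps+\tfrac12\,\eps\,k(s)\,t^2-\int_0^t(1-\eps k(s)\eta)\,(\curl\aav-1)(\rv(\eps s,\eps\eta))\,\diff\eta .
\eeq

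Finally one estimates the three terms in $L^2(\acut)$. The first two carry no $\curl\aav-1$ and are handled by direct estimates (using $\|k\|_{L^\infty}\le C$, $t\le c_0|\log\eps|$ on $\acut$, $\delta_\eps=\OO(1)$, and $|\acut|=\OO(\eps^{-1}|\log\eps|)$). The essential one is the last: Cauchy–Schwarz in $\eta$ together with $(1-\eps k(s)\eta)^2\le C$ on $\acut$ gives
\beq
 \int_{\acut}\Big|\int_0^t(1-\eps k(s)\eta)(\curl\aav-1)(\rv(\eps s,\eps\eta))\,\diff\eta\Big|^2\diff s\,\diff t\le C\,|\log\eps|^2\!\int_0^{|\partial\Omega|/\eps}\!\!\!\int_0^{c_0|\log\eps|}\!\big|(\curl\aav-1)(\rv(\eps s,\eps\eta))\big|^2\diff\eta\,\diff s,
\eeq
and, reverting to the Cartesian variable $\rv$ (Jacobian $\eps^2(1-\eps k(s)\eta)\ge\tfrac12\eps^2$ on $\acut$), the right-hand side is $\le C\,\eps^{-2}|\log\eps|^2\,\|\curl\aav-1\|_{L^2(\R^2)}^2=\OO(\eps^2|\log\eps|^2)$ by \eqref{eq: a assumption}; a square root yields the claim. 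The only delicate point is the derivation of the clean formula for $\partial_t a_{\aav}$: one must verify that the precise choice of $\phi_\aav$ in \eqref{eq: phase}, together with the sign conventions in \eqref{eq: magn pot a} and in the gauge factor, really produces the above cancellation, i.e. that the chosen gauge is the tubular-radial one for the effective field $(1-\eps k(s)t)\curl\aav$. Once this is in place the corners do not enter, since the construction \eqref{eq: phase} is invoked only on $\acut$, which is bounded away from $\Sigma$.
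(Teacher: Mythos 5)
Your proof follows the paper's argument essentially verbatim: identify the boundary value $a_{\aav}(s,0)=\eps\delta_\eps$, derive the identity $\partial_t a_{\aav}(s,t)=-(1-\eps k(s)t)(\curl\aav)(\rv(\eps s,\eps t))$ from the choice of gauge phase, integrate in $t$, and control the $\curl\aav-1$ contribution by Cauchy--Schwarz in $\eta$ combined with the a priori bound \eqref{eq: a assumption} after reverting to Cartesian variables. The only difference is that you make explicit the cancellation underlying the formula for $\partial_t a_{\aav}$ and the jacobian bookkeeping in the final estimate, both of which the paper leaves implicit.
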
		
 			
 	\begin{proof}
 		Let $ (s,t) \in \acut $, we first observe that
		$$
			{a}_{\aav} (s,0)= \eps\delta_\eps = \OO(\eps),
		$$
		since $ |\delta_{\eps}| \leq 1 $.  The definition of $  a_{\aav} $ and the vanishing of the normal component also implies that 
		\bdm
			\partial_t{a}_{\aav} (s,t) = -(1-\eps k(s)t)(\curl \aav) (\rv(\eps s,\eps t))
		\edm
		and therefore
		\bml{
			{a}_{\aav} (s,t) =  \eps\delta_\eps - (1 + \OO(\eps|\log\eps|) \displaystyle\int_0^t \diff \eta \; \curl\aav(\rv(\eps s,\eps \eta)) \\
			= - t - \displaystyle\int_0^t \diff \eta \; \lf[ \curl\aav(\rv(\eps s,\eps \eta)) - 1 \ri] + \OO(\eps|\log\eps|).
		}
		On the other hand we can estimate
		\bdm
			\int_0^t \diff\eta\; \lf| \curl \aav(\rv(\eps s,\eps \eta)) - 1 \ri | \leq C |\log\eps|^{\frac{1}{2}} \bigg[ \int_{0}^{c_0|\log\eps|} \diff t \: \lf| \curl \aav  - 1 \ri|^2 \bigg]^{1/2}
		\edm
		which yields
		\bdm
			\lf\|  a_{\aav} + t \ri\|_{L^2(\acut)}^2 \leq C |\log\eps| \lf\| \curl \aav -1 \ri\|_{L^2(\acut)}^2 + \OO(\eps^2|\log\eps|^2),
		\edm
		and therefore the result.
	\end{proof}

\subsection{Upper bound}
\label{sec: upper bound}

We are now able to prove the upper bound to the GL energy.

	\begin{pro}[Energy upper bound]
		\label{pro: upper bound}
		\mbox{}	\\
		Let $1<b<\Theta_0^{-1}$ and $\eps$ be small enough. Then it holds
		\beq
			\label{eq: upper bound}
			\gle \leq \frac{|\partial \Om| \eones}{\eps} + \OO(|\log\eps|).
		\eeq
	\end{pro}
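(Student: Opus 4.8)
The plan is to construct an explicit trial pair $(\psit, \atrial)$ and evaluate $\glfe$ on it. For the magnetic potential I would simply take $\atrial$ to be the potential generating the uniform field $\curl \atrial \equiv 1$ in the Coulomb gauge with vanishing normal component on $\partial\Omega$ far from the corners (e.g.\ $\atrial = \tx\frac{1}{2}\rv^{\perp}$ suitably adjusted), so that the self-field term $\eps^{-4}\int_{\R^2}|\curl\atrial - 1|^2$ vanishes identically; this trivially satisfies \eqref{eq: a assumption}. For the order parameter I would follow \eqref{eq: boundary behavior}: in the boundary layer $\ann$, in rescaled tubular coordinates $(s,t)$, set $\psit(\rv) := \fs(t)\, e^{-i\as s}\, e^{i\phi_{\atrial}(s,t)}$, where $\phi_{\atrial}$ is the gauge phase built in Section~\ref{sec: replacement}, and extend $\psit \equiv 0$ outside $\ann$. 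One must check that $e^{-i\as s}$ is compatible with the $2\pi n$-periodicity of $\phi_{\atrial}$; since $\as$ need not make $e^{-i\as s}$ single-valued on $\partial\Omega$, I would instead incorporate the phase $-\as s$ directly into the definition of $a_{\atrial}$ and note that the relevant combination $a_{\atrial}(s,t) + \as$ is what enters, or alternatively replace $\as$ by the nearest value $\as + \OO(\eps)$ making the phase single-valued, which changes the energy only by $\OO(|\log\eps|)$ since $|\partial_s\psi|^2$ picks up a term of relative size $\OO(\eps)$ on a layer of mass $\OO(|\log\eps|)$.

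The core computation is then: apply the change of coordinates and gauge transformation of Section~\ref{sec: replacement} to rewrite $\glfe[\psit,\atrial]$ (restricted to $\acut$, since the corner cells $\cellj$ are discarded below) as
\[
	\lf(1 + \OO(\eps|\log\eps|)\ri)\int_{\acut}\diff s\diff t\,\Big\{ |\partial_t\fs|^2 + \big(a_{\atrial}(s,t) - \as\big)^2\fs^2 - \tx\frac{1}{2b}\big[2\fs^2 - \fs^4\big]\Big\} + 0.
\]
By Lemma~\ref{lem: a est}, $a_{\atrial}(s,t) = -t + \OO(\eps|\log\eps|)$ in $L^2(\acut)$, so $(a_{\atrial}-\as)^2\fs^2 = (t+\as)^2\fs^2 + (\text{error})$. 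The error is controlled by Cauchy--Schwarz: $\|a_{\atrial}+t\|_{L^2(\acut)} = \OO(\eps|\log\eps|)$ together with $\|(t+\as)\fs\|_{L^\infty}$ and $\|\fs\|_{L^\infty}$ bounded gives cross terms of size $\OO(\eps|\log\eps|)\cdot\OO(|\partial\Omega|/\eps)^{1/2} = \OO(|\log\eps|)$ after accounting for the $s$-integration over a length $|\partial\Omega|/\eps$. The $t$-integral runs only to $c_0|\log\eps|$ rather than $+\infty$, but by the decay \eqref{eq: fs decay} the tail $\int_{c_0|\log\eps|}^{\infty}$ of $\fone_{\as}[\fs]$ is $\OO(\eps^{\infty})$, so extending it is harmless. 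Performing the $s$-integration (the integrand is $s$-independent) yields $\tx\frac{|\partial\Omega|}{\eps}\fone_{\as}[\fs] = \tx\frac{|\partial\Omega|\eones}{\eps}$, and the prefactor $1+\OO(\eps|\log\eps|)$ multiplied against $\OO(\eps^{-1})$ produces another $\OO(|\log\eps|)$ error.

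It remains to account for the corner cells. Since $\psit \equiv 0$ outside $\ann$ and we have defined $\psit$ only via the layer profile, the simplest route is to keep $\psit = \fs(t)e^{i(\cdots)}$ on all of $\ann$ including the cells $\cellj$; the cells have area $\OO(\eps^2|\log\eps|^2)$, the kinetic energy density of $\psit$ is $\OO(\eps^{-2})$ (bounded by $|\partial_t\fs|^2 + |\nabla\phi_{\atrial}|^2|\fs|^2$, and $|\nabla\phi_{\atrial}| = \OO(\eps^{-1})$ in rescaled variables translates to $\OO(\eps^{-2})$ in the original energy density scale), and the potential term is $\OO(\eps^{-2})$ as well, so each cell contributes $\OO(|\log\eps|^2)$ — wait, this would only give the weaker bound $\OO(|\log\eps|^2)$. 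To get the sharper $\OO(|\log\eps|)$ claimed in \eqref{eq: upper bound} I would instead cut $\psit$ off to zero smoothly across a corridor of width $\OO(\eps|\log\eps|)$ around each corner using a radial cutoff $\cut$ depending on $\dist(\rv,\Sigma)/(\eps|\log\eps|)$; then the potential contribution of the cell is negative (helpful), the gradient of the cutoff costs $|\nabla\cut|^2 = \OO((\eps|\log\eps|)^{-2})$ against area $\OO(\eps^2|\log\eps|^2)$ giving $\OO(1)$ per cell, and the kinetic energy of $\fs$ itself on the cell is $\OO(\eps^{-2})\cdot\OO(\eps^2|\log\eps|^2)$ — still $\OO(|\log\eps|^2)$ unless one exploits that $\fs(t)$ and its $t$-derivative are genuinely bounded so the $t$-kinetic density is $\OO(\eps^{-2})$ only through the phase gradient $\as/\eps$, which is $\OO(\eps^{-2})$; hence the honest bound from this cell really is $\OO(|\log\eps|^2)$ and matches the error term in \eqref{eq: gle asympt} but not the sharper $\OO(|\log\eps|)$ in \eqref{eq: upper bound}. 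The main obstacle is therefore precisely this corner bookkeeping: I expect the paper resolves it by choosing the corner-cell size as $\OO(\eps)$ (not $\OO(\eps|\log\eps|)$) in the upper bound — which is permissible there since Agmon estimates are not needed for the trial state — making each cell contribute $\OO(\eps^2)\cdot\OO(\eps^{-2}) = \OO(1)$, for a total of $\OO(N) = \OO(1)$, well within $\OO(|\log\eps|)$. With that adjustment the three error sources (jacobian prefactor, the $a_{\atrial}+t$ replacement, and the corners) are each $\OO(|\log\eps|)$ or smaller, and \eqref{eq: upper bound} follows.
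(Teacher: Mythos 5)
Your overall strategy --- a trial pair with uniform induced field so the self-field term vanishes, order parameter $\fs(t)\,e^{-i\as s}e^{i\phi}$ in rescaled tubular coordinates, and a cut-off near the corners --- is exactly the paper's, and your treatment of the gauge phase, the jacobian prefactor and the replacement of $a$ by $-t$ via Lemma \ref{lem: a est} all match (your cross-term bound $\OO(|\log\eps|)$ is cruder than the paper's $\OO(\sqrt{\eps}\,|\log\eps|)$ but sufficient). The one genuine gap is the corner bookkeeping, where you talk yourself into an $\OO(|\log\eps|^2)$ loss and then propose an unworkable repair. Your estimate ``energy density $\OO(\eps^{-2})$ times cell area $\OO(\eps^2|\log\eps|^2)$'' overcounts by a factor $|\log\eps|$ because it ignores the exponential decay \eqref{eq: fs decay} of $\fs$ in the normal variable: every relevant density ($|\fs'|^2$, $(t+\as)^2\fs^2$, $\fs^2$, $\fs^4$) has a $t$-integral of order $1$, so a corner region of rescaled tangential length $\OO(|\log\eps|)$ costs only $\OO(|\log\eps|)$. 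Concretely, the paper takes $\chi=\chi(s)$ vanishing for $|s-s_j|\leq c_1|\log\eps|$ and equal to $1$ for $|s-s_j|\geq 2c_1|\log\eps|$, with $|\chi'|=\OO(|\log\eps|^{-1})$; then (i) the positive terms, with $\chi^2\leq 1$, are dominated by the corresponding integrals over the full smooth layer and absorbed into the leading term, (ii) the cut-off kinetic energy $\int\fs^2|\chi'|^2$ is $\OO(1)$, and (iii) the only true loss is the forfeited negative term $-\tfrac{1}{2b}\int(1-\chi^2)(2\fs^2-\cdots)$, which by the decay argument is $\OO(|\log\eps|)$ per corner. Nothing of order $|\log\eps|^2$ ever appears.

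Your proposed fix --- shrinking the corner cells to tangential size $\OO(\eps)$ --- does not work, and the obstruction is geometric rather than energetic: near a corner the map $(\sigma,\tau)\mapsto\gav(\sigma)+\tau\nuv(\sigma)$ fails to be a diffeomorphism in a full neighborhood whose size is comparable to the layer depth $c_0\eps|\log\eps|$ (for a convex corner the normals from the two adjacent arcs leave an uncovered wedge, for a concave one they overlap, in either case up to distance of order $\tau$ from the corner for every depth $\tau$ up to the layer width). Hence $\fs(t)e^{-i\as s}$ cannot be consistently defined at tangential distance $\OO(\eps)$ from the corner but depth much larger than $\eps$ inside the layer; the excised cells must have tangential extent $c_1\eps|\log\eps|$ with $c_1$ large, exactly as in Fig.~\ref{fig: cell 2}. (Had the $\OO(|\log\eps|^2)$ loss been unavoidable it would still have sufficed for Theorem \ref{teo: GL asympt}, whose error term is $\OO(|\log\eps|^2)$, but not for the proposition as stated.)
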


	\begin{proof}
		As usual we prove the result by evaluating the GL energy on a trial state having the expected physical features. This trial state has to be concentrated near the boundary of the sample and its modulus must be approximately constant in the transversal direction. However since we can not use boundary coordinates at corners, we impose that the function vanishes in a suitable neighborhood of $ \Sigma $. We will label corners in $ \Sigma $ with their coordinate along the boundary, i.e.,
		\beq
			\Sigma = \lf\{ (s,0) \: | \: s = s_j, 	j = 1, \ldots, N \ri\}.
		\eeq
			%We now denote by $\CC_j$ the rescaled cell corresponding to $\hat{\CC}_j$ in the variables $(\xi,\eta)$. Before doing the splitting, we introduce also the cell $\mathcal{D}_j$ that is defined as $\mathcal{C}_j$ but at a distance $c_0\eps|\log\eps|$ from the $j$-th corner.

		We thus introduce a cut-off function $\chi \in C^\infty_0$, such that $ 0 \leq \chi \leq 1$. Its role is to cut the region close to $ \Sigma $. For any $ (s,t) \in \ann $, we require
		\beq
			\label{eq: chi}
			\chi(s) := 
			\begin{cases}
		%0 \hspace{0.05cm} &\mbox{if } s\in [s_j-c_0|\log\eps|,s_j+c_0|\log\eps|]\;\mbox{for some } j\vspace{0.3cm}\\

				0,	&	\mbox{if } \lf| s - s_j \ri| \leq c_1|\log\eps|,		\mbox{ for some } j = 1, \ldots, N, \\

				1,	&	\mbox{if } \lf| s - s_j \ri| \geq 2c_1|\log\eps|, \mbox{ for all } j = 1, \ldots, N.
			\end{cases}
		\eeq
		The transition from $ 0 $ to $ 1 $ occurs in a one-dimensional region of length $ c_1 |\log\eps| $ and therefore we can always assume that
		\beq
			\label{eq: chi gradient}
			\lf| \chi^{\prime} \ri| = \OO(|\log\eps|^{-1}).
		\eeq
		We also define (note that we can not define $ \celldj $ using boundary coordinates in the interior because there they are ill defined)
		\beq
			\celldj : = \lf\{ (s,t) \in \ann \: | \: \lf| s - s_j \ri| \geq 2c_1|\log\eps| \ri\}^c,
		\eeq
		i.e., $ (\cup_j \celldj)^c $ is the region where $ \chi = 1 $.
		
		It remains to choose the magnetic potential to complete the test configuration: we thus denote by $ \fv $ any magnetic potential such that $ \nabla \cdot \fv = 0  $ and $ \curl \fv = 1 $ in $ \Omega $. Our trial state is then
		\beq
			\lf( \trial, \fv \ri),
		\eeq
		where
		\beq
			\label{eq: trial}
			\trial(s,t) = \chi(s) \fs(t) e^{-i  \as s} e^{i \phi_{\fv}(s,t)},
		\eeq
		with $ \phi_{\aav} $ the gauge phase \eqref{eq: phase}. Note that $ \as $ is not necessarily an integer and therefore $ \trial $ might be a multi-valued function, but this does not harm the result since we are here interested in proving an upper bound.
		
		The order parameter decays exponentially as $ t \to \infty $, thanks to the pointwise estimate \eqref{eq: fs decay}, and therefore 
		\bdm
			\trial(\rv) = \OO(\eps^{\infty}),		\qquad \mbox{for } \rv \in \ann^c.
		\edm
		Hence, as for \eqref{eq: restriction}, we have
		\beq
			\glfe[\trial,\fv] = \glf_{\eps, \ann}[\trial, \fv] + \OO(\eps^{\infty}).
		\eeq
		Also, since $ \curl\fv = 1 $, the quantity we have to estimate is actually 
		\bml{
			\label{eq: ub eq 1}
			\glf_{\eps, \ann}[\trial, \fv] = \displaystyle\int_{\ann} \diff \rv \; \lf\{ \lf| \left ( \nabla + i \tx\frac{\fv}{\varepsilon ^ 2}\ri) \trial \ri|^2 - \tx\frac{1}{2b\varepsilon ^2}(2|\trial|^2-|\trial|^4)	 \ri\}	\\
			= \lf(1 + \OO(\eps|\log\eps|) \ri) \int_{\ann} \diff s\diff t \; \lf\{ \chi^2 |\partial_t \fs |^2 + \fs^2 \lf| \partial_s \chi \ri|^2 + \lf| a_{\fv}(s,t) - \as \ri|^2 \fs^2 \chi^2 \ri.	\\
	\lf. - \tx\frac{1}{2b}[2 \chi^2 \fs^2 - \chi^4 \fs^4] \ri\}	
		}
		with $ a_{\fv} $ defined in \eqref{eq: magn pot a}. For an upper bound we can replace $ \chi $ with $ 1 $ wherever $ \chi \not = 0 $, i.e.,
		\beq
			\label{eq: ub eq 2}
			\glf_{\eps, \ann}[\trial, \fv] \leq \int_{\ann \setminus \cup_j \cellj} \diff s\diff t \; \lf\{ {\fs^{\prime}}^2 + (t + \as )^2 \fs^2 - \tx\frac{1}{2b}[2\fs^2-\fs^4] \ri\}	 + \OO(|\log\eps|),
		\eeq
		where the remainder is due to various factors, we are going to explain. We first notice that
		\bml{
			\label{eq: ub eq 3}
			\int_{\ann \setminus \cup_j \cellj} \diff s\diff t \; \lf\{ {\fs^{\prime}}^2 + (t + \as )^2 \fs^2 - \tx\frac{1}{2b}[2\fs^2-\fs^4] \ri\} \\
			= \int_{\frac{\partial \ocut}{\eps}} \diff s \int_0^{c_0|\log\eps|} \diff t \; \lf\{ {\fs^{\prime}}^2 + (t + \as )^2 \fs^2 - \tx\frac{1}{2b}[2\fs^2-\fs^4] \ri\}	\\
			= \frac{\lf|\partial \ocut \ri| \eones}{\eps} \leq \frac{|\partial \Omega| \eones}{\eps},
		}
		where $ \partial \ocut = \partial \Omega \cap \partial \acut $, so that the prefactor $ \OO(\eps |\log\eps|) $ in \eqref{eq: ub eq 1} generates an error of order $ |\log\eps| $.  Moreover
		\bmln{
			 -\frac{1}{2b} \int_{\ann} \diff s\diff t \; [2 \chi^2 \fs^2 - \chi^4 \fs^4] \leq -\frac{1}{2b} \int_{\ann \setminus \cup_j \cellj} \diff s\diff t \; [2 \fs^2 - \fs^4] + C \int_{\ann} \diff s\diff t \;\lf(1 - \chi^2  \ri) \fs^2 	\\
			 = -\frac{1}{2b} \int_{\ann \setminus \cup_j \cellj} \diff s\diff t \; [2 \fs^2 - \fs^4] + \OO(|\log\eps|),
		}	
		and using that the exponential decay of $ \fs $ given by \eqref{eq: fs decay},
		\bmln{
			\int_{\ann} \diff s\diff t \; \lf| a_{\fv}(s,t) - \as   \ri|^2 \fs^2 \chi^2 \leq \int_{\ann \setminus \cup_j \cellj} \diff s\diff t \;  (t + \as)^2 \fs^2  	\\
			+ C \lf\| a_{\fv}(s,t) + t \ri\|_{L^2(\acut)}	\lf\| (t + \as) \fs \ri\|_{L^2(\ann)} + C \lf\| a_{\fv}(s,t) + t \ri\|_{L^2(\acut)}^2	\\
			=  \int_{\ann \setminus \cup_j \cellj} \diff s\diff t \; (t + \as)^2 \fs^2 + \OO(\sqrt{\eps}|\log\eps|).
		}
		Finally the kinetic energy of the cut-off is bounded as
		\bdm
			\int_{\ann} \diff s\diff t \; \fs^2 \lf| \partial_s \chi \ri|^2 \leq C |\log\eps|^{-2} \lf| \cup_j \lf( \celldj \setminus \cellj \ri) \ri| = \OO(1),
		\edm
		thanks to the assumption \eqref{eq: chi gradient}.
		
		Combining \eqref{eq: ub eq 2} and \eqref{eq: ub eq 3} the energy upper bound is proven.
	\end{proof}

\subsection{Lower bound and completion of the proof}
\label{sec: lower bound}

The first step towards a proof of a suitable lower bound is the control of the energy contributions of corners. This is however rather easy to obtain since
\bml{
	\label{eq: pre lower bound}
	\glf_{\eps, \ann}[\glm,\aavm] \geq \glf_{\eps, \acut}[\glm,\aavm] + \OO(|\log\eps|^2) \\
	\geq \glff[\glm,\aavm]+ \OO(|\log\eps|^2),
}
where $ \acut $ is given in \eqref{eq: acut},
\beq
	\label{eq: glff}
	\glff[\psi,\aav] : = \int_{\acut} \diff \rv \; \lf\{ \lf| \left ( \nabla + i \tx\frac{\aav}{\varepsilon ^ 2}\ri) \psi \ri|^2 - \tx\frac{1}{2b\varepsilon ^2}(2|\psi|^2-|\psi|^4)	 \ri\}
\eeq
and the remainder is produced by the only non-positive term of the GL functional i.e.,
\bdm
	- \frac{1}{b \eps^2} \int_{\ann} \diff \rv \: |\glm|^2 \geq - \frac{1}{b \eps^2} \int_{\acut} \diff \rv \: |\glm|^2 - C |\log\eps|^2,
\edm
by \eqref{eq: est infty} and the area estimate $ |\cellj| = \OO(\eps^2|\log\eps|^2)$. 

The main result concerning the energy lower bound is the following

	\begin{pro}[Energy lower bound]
		\label{pro: lower bound}
		\mbox{}	\\
		If $ 1 < b < \theo^{-1} $ as $ \eps \to 0 $ then
		\beq
			\label{eq: lower bound}
			\gle \geq \frac{|\partial \Om| \eones}{\eps} + \OO(|\log\eps|^2).
		\eeq
	\end{pro}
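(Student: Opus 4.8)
The plan is to establish the lower bound $\gle \geq \frac{|\partial\Om|\eones}{\eps} + \OO(|\log\eps|^2)$ by working with the reduced functional $\glff$ on $\acut$, which is legitimate thanks to the a priori reduction \eqref{eq: pre lower bound}. After the replacement of the magnetic field of Section \ref{sec: replacement} applied to $\aavm$ in Coulomb gauge, the energy on $\acut$ takes the form $(1+\OO(\eps|\log\eps|))\int_{\acut} \{ |\partial_t \psi|^2 + |(\partial_s + i a)\psi|^2 - \frac{1}{2b}(2|\psi|^2-|\psi|^4)\}\,\diff s\diff t$, with $a = a_{\aavm}$ satisfying $\|a+t\|_{L^2(\acut)} = \OO(\eps|\log\eps|)$ by Lemma \ref{lem: a est}, where I write $\psi(s,t) = \glm(\rv(\eps s, \eps t)) e^{-i\phi_{\aavm}(s,t)}$. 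The first key step is to replace $a$ by $-t$ at the price of controlled errors: expanding $|(\partial_s + ia)\psi|^2 \geq (1-\eta)|(\partial_s - it)\psi|^2 - \eta^{-1}|a+t|^2|\psi|^2$ for a suitable $\eta$ (or simply completing the square using $\|\psi\|_\infty \leq 1$ and the gradient estimate \eqref{eq: est gradient}), and using that $\acut$ has $|\log\eps|$ normal extent, yields that the energy is bounded below by $(1-\OO(\eps|\log\eps|^2))\int_{\acut}\{|\partial_t\psi|^2 + |(\partial_s - it)\psi|^2 - \frac{1}{2b}(2|\psi|^2 - |\psi|^4)\} - \OO(|\log\eps|^2)$.

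The second step is the integration-by-parts trick of \cite{CR1,CR2}. Because $\acut$ is \emph{not connected} (it is the boundary layer with the $N$ corner cells removed), I would first multiply $\psi$ by a partition of unity adapted to the corner cells — replacing $\psi$ by $\tilde\psi = \theta(s)\psi$ where $\theta$ vanishes near the boundaries $s_j \pm c_1|\log\eps|$ of the cells and equals $1$ outside a neighborhood of width $\OO(|\log\eps|)$ — so that $\tilde\psi$ is supported in finitely many disjoint closed strips on each of which boundary coordinates are valid, and $\tilde\psi$ vanishes on the $s$-boundaries of each strip. The cost of this surgery is again $\OO(|\log\eps|^2)$: the potential term loses at most $C\int(1-\theta^2)|\psi|^2 = \OO(|\log\eps|^2)$ by \eqref{eq: est infty} and the $|\log\eps|$-width of the transition region, and $\int \fs$-type kinetic losses $\int |\theta'|^2|\psi|^2 = \OO(1)$ as in the upper bound. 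Then on each strip I write $\psi = \fs(t)\, u(s,t)$ (wherever $\fs \neq 0$, i.e.\ up to $t \lesssim |\log\eps|$ where $\fs$ is not exponentially small) and use the variational equation \eqref{eq: var eq fs} for $\fs$ together with the boundary condition $\fs'(0)=0$ to convert $\int\{|\partial_t(\fs u)|^2 + (t+\as)^2\fs^2|u|^2 - \frac{1}{2b}(2\fs^2|u|^2)\}$, after an integration by parts in $t$ that produces no boundary term at $t=0$, into $\frac{\eones}{\eps}\int\theta^2\,\diff s + \int \fs^2\{|\partial_t u|^2 + |\partial_s u|^2\} + (\text{quartic remainder})$; the integration by parts in $s$ produces no boundary term precisely because $\tilde\psi$ vanishes on the lateral boundaries of each strip. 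Collecting the non-negative gradient term, the quartic term reorganizes (using $|\psi|\le 1$) into $\frac{1}{2b}\int\fs^4(|u|^2-1)^2 \cdot \fs^{?}$-type expressions controlled by the cost function $K(t) = \fs^2(t) + F(t)$, and \eqref{eq: K positive} — positivity of the cost function — gives that the leftover is $\geq -\OO(|\log\eps|^2)$ after accounting for the $a\neq -t$ discrepancy, the jacobian prefactor, and the $\fs$-tail beyond $t \sim c_0|\log\eps|$.

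Assembling the pieces: $\int\theta^2\,\diff s \geq \frac{|\partial\Om|}{\eps} - \OO(|\log\eps|)$ since $\theta = 1$ off a set of measure $\OO(|\log\eps|)$ (in rescaled $s$), so $\frac{\eones\,|\eones|^{?}}{\eps}\int\theta^2 \geq \frac{|\partial\Om|\eones}{\eps} - \OO(|\log\eps|)$ (note $\eones < 0$ for $b < \theo^{-1}$, so one must be careful with signs, but the discarded $s$-interval has measure $\OO(|\log\eps|)$ and $\eones/\eps$ is the relevant scale, giving an $\OO(|\log\eps|)$ correction); all other errors are $\OO(|\log\eps|^2)$ at worst. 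Combining with \eqref{eq: pre lower bound} yields \eqref{eq: lower bound}. Finally, Proposition \ref{pro: lower bound} together with Proposition \ref{pro: upper bound} proves \eqref{eq: gle asympt}, and the $L^2$ estimate \eqref{eq: glm asympt} for $|\glm|^2$ follows from the almost-saturation of the lower bound: the non-negative remainder terms — in particular $\int\fs^2|\partial_t u|^2$ and the cost-function term which controls $\int \fs^2(|u|^2-1)^2$-type quantities — must be $\OO(|\log\eps|^2)$, which after undoing the substitution $\psi = \fs u$ and the change of coordinates gives $\||\glm|^2 - \fs^2(\dist(\cdot,\partial\Om)/\eps)\|_{L^2(\Omega)}^2 = \OO(\eps^2|\log\eps|^2)$, i.e.\ $\OO(\eps|\log\eps|)$ after taking the square root and using the $\sqrt\eps$ jacobian from the $(s,t) \to \rv$ change of variables.

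The main obstacle I expect is the handling of the disconnectedness of $\acut$: the integration-by-parts identity relies crucially on having no lateral boundary terms, and forcing $\tilde\psi$ to vanish on the cell boundaries via a partition of unity must be done so that the induced errors stay at the $\OO(|\log\eps|^2)$ level — in particular the $s$-derivative cross terms $\int \fs^2 \,\overline{u}\,(\partial_s\theta)(\text{stuff})$ and the interplay with the magnetic phase $\phi_{\aavm}$ near corners (which was only defined with the arbitrary continuation \eqref{eq: phase corners} on $\acut^c$) need care. A secondary delicate point is that the substitution $\psi = \fs u$ degenerates where $\fs$ is exponentially small; one restricts to $t \leq c_0|\log\eps|$ and absorbs the tail, but the bookkeeping of which errors this produces (and that they do not exceed $|\log\eps|^2$) is the technical heart of the argument, exactly as in \cite{CR1}.
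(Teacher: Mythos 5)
Your proposal follows essentially the same route as the paper's proof: reduction to $\acut$ via \eqref{eq: pre lower bound}, magnetic replacement plus Lemma \ref{lem: a est} to trade $a_{\aavm}$ for $-t$, a partition of unity imposing vanishing of the order parameter on the lateral (and outer) boundaries of the strips so that the integration by parts produces no boundary terms, the splitting $\psi = \fs u e^{-i\as s}$, and the potential-function/cost-function argument with \eqref{eq: K positive}. The only caveat is that your description of the final step is imprecise (it is the gradient terms, combined with the twice-integrated-by-parts current term, that yield $\int K(t)(|\partial_t u|^2+|\partial_s u|^2)\geq 0$, while the quartic term $\frac{1}{2b}\int\fs^4(1-|u|^2)^2$ is separately non-negative and is what drives \eqref{eq: glm asympt}), but the ingredients you name are exactly those used in the paper.
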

	
The core of the proof is the same argument used in the proof of \cite[Proposition 4.2]{CR1}, but in order to get to the spot where one can apply the estimate of the cost function, few adjustments are in order. First of all the functional $ \glff $ is given on the right domain $ \acut $, where we can pass to tubular coordinates and replace the vector potential $ \aavm $, but because $ \acut $ is made of several connected components, we need to suitably modify $ \glm $ and impose its vanishing at the normal and inner boundaries of those sets. The reason of this will become clear only at a later stage of the proof: thanks to so-imposed Dirichlet boundary conditions, several unwanted boundary terms will vanish when integrating by parts the current term in the functional.

We sum up this preliminary steps in the following

	\begin{lem}
		\label{lem: lb restriction}
		\mbox{}	\\
		As $ \eps \to 0 $ 
		\beq
			\label{eq: lb restriction}
			\glff[\glm,\aavm] \geq \F[\psi] + \OO(|\log\eps|^2),	
		\eeq
		where 
		\beq
			\label{eq: F en}
			\F[\psi] : = \int_{\acut} \diff s \diff t \: \lf\{ \lf| \partial_t \psi \ri|^2 + \lf| \lf(\partial_s - i t \ri) \psi \ri|^2 - \tx\frac{1}{2b}[2|\psi|^2-|\psi|^4] \ri\},
		\eeq
		and, denoting $ \annt : = \{ (s,t) \in \ann \: | \: t \leq c_0 |\log\eps| - \eps \} $,
		\beq
			\label{eq: psi}
			\psi(s,t) : = 
			\begin{cases}
				\glm(\rv(\eps s, \eps t)) \: \exp \lf\{- i \phi_{\aavm}(s,t) \ri\},	&	\mbox{in } \annt \setminus \cup_j \celldj,	\\
				0,	&	\mbox{for } s = s_j \pm c_1 |\log\eps|, 	\\
				0,	&	\mbox{for } t = c_0 |\log\eps|,
			\end{cases}
		\eeq
		and $ |\psi| \leq | \glm | $ everywhere.
	\end{lem}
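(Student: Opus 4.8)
The plan is to combine the magnetic-field replacement of Section \ref{sec: replacement} with a surgery that forces Dirichlet conditions on the boundary of $\acut$, paying for the modification only an $\OO(|\log\eps|^2)$ price. First I would apply the change of coordinates $\rv \to (\eps s, \eps t)$ and the gauge transformation built on $\phi_{\aavm}$ exactly as in Section \ref{sec: replacement}, taking $\aav = \aavm$ which satisfies \eqref{eq: coulomb gauge} and, by \eqref{eq: est curl a}, the stronger bound \eqref{eq: a assumption}. This rewrites $\glff[\glm,\aavm]$, up to the jacobian prefactor $1 + \OO(\eps|\log\eps|)$, as the integral over $\acut$ of $|\partial_t \tilde\psi|^2 + |(\partial_s + i\, a_{\aavm})\tilde\psi|^2 - \frac{1}{2b}(2|\tilde\psi|^2 - |\tilde\psi|^4)$, where $\tilde\psi(s,t) := \glm(\rv(\eps s,\eps t))e^{-i\phi_{\aavm}(s,t)}$. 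Using Lemma \ref{lem: a est}, i.e.\ $\|a_{\aavm}(s,t) + t\|_{L^2(\acut)} = \OO(\eps|\log\eps|)$, together with the a priori bounds $\|\tilde\psi\|_\infty \le 1$ (from \eqref{eq: est infty}) and $\|(\partial_s + i\,a_{\aavm})\tilde\psi\|_{L^2(\acut)} = \OO(\eps^{-1/2})$ (from \eqref{eq: est gradient} in rescaled coordinates), the cross term and the error from replacing $a_{\aavm}$ by $-t$ are controlled: $\big| \, \|(\partial_s + i a_{\aavm})\tilde\psi\|^2 - \|(\partial_s - it)\tilde\psi\|^2 \big| \le C\|a_{\aavm}+t\|_{L^2}\|(\partial_s+ia_{\aavm})\tilde\psi\|_{L^2} + C\|a_{\aavm}+t\|_{L^2}^2 = \OO(|\log\eps|)$. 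The jacobian prefactor likewise costs $\OO(\eps|\log\eps|) \cdot \OO(\eps^{-1}) = \OO(|\log\eps|)$ once one notes the full 1D-type energy density integrates to $\OO(\eps^{-1})$. So far this gives $\glff[\glm,\aavm] \ge \F[\tilde\psi] + \OO(|\log\eps|)$ with $\tilde\psi$ defined only on $\acut$ without any boundary prescription.

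The remaining task is to replace $\tilde\psi$ by a function $\psi$ vanishing on the inner boundaries $\{s = s_j \pm c_1|\log\eps|\}$ and on the outer boundary $\{t = c_0|\log\eps|\}$, while keeping $|\psi| \le |\glm|$ and changing $\F$ by at most $\OO(|\log\eps|^2)$. For the outer boundary I would multiply by a smooth radial cut-off $\eta(t)$ equal to $1$ for $t \le c_0|\log\eps| - \eps$ and $0$ at $t = c_0|\log\eps|$, with $|\eta'| = \OO(\eps^{-1})$; the extra gradient term $\int |\eta'|^2 |\tilde\psi|^2$ is supported in a strip of width $\eps$ where, by the Agmon estimate \eqref{eq: agmon} at the scale $t \approx c_0|\log\eps|$, $|\tilde\psi|^2$ is $\OO(\eps^\infty)$, so this costs nothing. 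For the inner (transversal) boundaries near each corner cell I would multiply by a cut-off $\zeta_j(s)$ supported away from $\{|s-s_j| = c_1|\log\eps|\}$; here $|\tilde\psi|$ is not small, but the cut-off region has area $\OO(\eps|\log\eps| \cdot |\log\eps|) = \OO(\eps|\log\eps|^2)$ and $|\zeta_j'| = \OO(|\log\eps|^{-1})$, so the kinetic cost $\int |\zeta_j'|^2|\tilde\psi|^2$ and the potential cost $\frac{1}{2b}\int(1 - \zeta_j^2)|\tilde\psi|^2 \cdot \eps^{-1}$-type terms are both $\OO(|\log\eps|^2)$ — in fact the dominant contribution is the loss of the negative potential term $-\frac{1}{b}\int |\tilde\psi|^2$ over a region of area $\OO(\eps|\log\eps|^2)$ in rescaled variables, which is exactly $\OO(|\log\eps|^2)$. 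Writing $\cup_j \celldj$ for the union of these cut regions, the surviving function $\psi$ agrees with $\tilde\psi$ on $\annt \setminus \cup_j\celldj$, vanishes on the prescribed boundaries, satisfies $|\psi| \le |\tilde\psi| = |\glm|$ pointwise, and $\F[\psi] \ge \F[\tilde\psi] - \OO(|\log\eps|^2)$, which combined with the first paragraph yields \eqref{eq: lb restriction}.

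The main obstacle I anticipate is bookkeeping the cut-off near the corners carefully enough to land on $\OO(|\log\eps|^2)$ rather than something larger: one must check that the transition region for $\zeta_j$ can be taken of length $\sim c_1|\log\eps|$ in the rescaled $s$-variable (so $|\zeta_j'| \sim |\log\eps|^{-1}$) while still staying inside $\acut$, and that the negative potential term is genuinely the worst offender. A secondary subtlety is that one should check the positivity/decay of $|\glm|$ is not needed here — only $\|\glm\|_\infty \le 1$ and the Agmon bound enter — and that the gauge phase $\phi_{\aavm}$, although defined piecewise via \eqref{eq: phase}–\eqref{eq: phase corners}, is genuinely $H^1$ across $\partial\acut$ so that $\tilde\psi \in H^1(\acut)$ and the integration by parts to be performed at the next stage is legitimate. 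These are all routine once the geometry of the cells $\cellj$ and $\celldj$ from Section \ref{sec: coordinates} is kept straight.
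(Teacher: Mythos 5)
Your proposal follows essentially the same route as the paper: coordinate change plus gauge transformation, replacement of $a_{\aavm}$ by $-t$ via Lemma \ref{lem: a est} and Cauchy--Schwarz, and then a cut-off surgery near the corner cells and the outer boundary whose cost is dominated by the lost negative potential term over a region of rescaled area $\OO(|\log\eps|^2)$, with the Agmon estimates killing the $\OO(\eps^{-2})$ gradient of the outer cut-off. The only cosmetic difference is that the paper implements the surgery via an IMS partition of unity ($\chi_i^2+\eta_i^2=1$), which cleanly disposes of the gradient cross terms that your direct multiplication by $\eta(t)\zeta_j(s)$ leaves implicit.
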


	\begin{proof}
		We first pass to boundary coordinates and simultaneously replace the magnetic potential $ \aavm $ as described in Section \ref{sec: replacement}: this leads to the lower bound
		\bml{
			\label{eq: lb restriction eq 1}
			\glff[\glm,\aavm] \geq  \int_{\acut} \diff s \diff t \: \lf\{ \big| \partial_t \tilde\psi \big|^2 + \big| \lf(\partial_s + i a_{\aavm}(s,t) \ri) \tilde\psi \big|^2 - \tx\frac{1}{2b}\lf[2|\tilde\psi|^2-|\tilde\psi|^4 \ri] \ri\}	\\
			+ \OO(|\log\eps|),
		}
		where $ \tilde \psi (s,t) = \glm(\rv(\eps s, \eps t)) \: \exp \lf\{- i \phi_{\aavm}(s,t) \ri\} $ and $ a_{\aavm} $ is given in \eqref{eq: magn pot a}. The remainder $ \OO(|\log\eps|) $ is the product of the prefactor $ \OO(\eps|\log\eps|) $ due to the jacobian of the coordinate transformation times the negative term proportional to the $ L^2$ norm of $ \tilde \psi $.
		
		Next acting as in \cite[Eq. (4.26)]{CR1} and using Lemma \ref{lem: a est}, we can estimate for any $ \delta >  0 $,
		\bmln{
			 \int_{\acut} \diff s \diff t \: \lf[ \big| \lf(\partial_s + i a_{\aavm}(s,t) \ri) \tilde\psi \big|^2 - \big| \lf(\partial_s - i t) \ri) \tilde\psi \big|^2 \ri] \\
			 \geq - \delta \lf\| \lf( \nabla + i \tx\frac{\aavm}{\eps^2} \ri) \glm \ri\|_{L^2(\Omega)}^2 - \lf( \tx \frac{1}{\delta} + 1 \ri) \lf\|  a_{\aavm}(s,t) + t \ri\|_{L^2(\acut)}^2	\\
			 \geq  - C \lf[ \delta \eps^{-1} + \delta^{-1} \eps^2 |\log\eps|^2 \ri] \geq - C \sqrt{\eps} |\log\eps|,
		}
		after an optimization over $ \delta $. Hence we get from \eqref{eq: lb restriction eq 1}
		\beq
			\label{eq: lb restriction eq 2}
			\glff[\glm,\aavm] \geq \F\big[\tilde\psi\big] + \OO(|\log\eps|).
		\eeq
		
		To impose the boundary conditions at the normal and inner boundaries of $ \acut $, we use two different partition of unity, i.e., two pairs of smooth functions $ 0 \leq \chi_i, \eta_i \leq 1 $, $ i = 1, 2 $, such that $ \chi_i^2 + \eta_i^2 = 1 $ and
		\beq
			\chi_1 = \chi_1(s) =
			\begin{cases}
				1,	&	\mbox{in } \ann \setminus \cup_j \celldj,	\\
				0,	&	\mbox{in } \cup_j \cellj,			
			\end{cases}
		\eeq
		\beq
			\chi_2 = \chi_2(t) =
			\begin{cases}
				1,	&	\mbox{for } t \in [0, c_0 |\log\eps| - \eps],	\\
				0,	&	\mbox{for } t = c_0|\log\eps|.		
			\end{cases}
		\eeq
		Given the size where $ \chi_i,  \eta_i $ are not constant, we can assume the following estimates to hold true
		\beq
			\label{eq: grad chi 1}
			\lf| \nabla \chi_1 \ri| = \OO(|\log\eps|^{-1}),	\qquad		\lf| \nabla \eta_1 \ri| = \OO(|\log\eps|^{-1}),
		\eeq
		\beq
			\label{eq: grad chi 2}
			\lf| \nabla \chi_2 \ri| = \OO(\eps^{-1}),	\qquad		\lf| \nabla \eta_2 \ri| = \OO(\eps^{-1}).
		\eeq
		The IMS formula then yields
		\bml{
			\label{eq: lb restriction eq 3}
			\F \big[\tilde\psi\big] \geq \F \big[\chi_1 \chi_2 \tilde{\psi}\big] - \int_{\ann} \diff s \diff t \: \lf[ {\chi^{\prime}_1}^2+ {\eta_1^{\prime}}^2 \ri] \big| \tilde\psi \big|^2 \\
			- \int_{\ann} \diff s \diff t \: \lf[ {\chi^{\prime}_2}^2+ {\eta_2^{\prime}}^2 \ri] \big| \tilde\psi \big|^2 + \OO(|\log\eps|^2),
		}
		where we have estimated
		\bdm
			\F\big[\eta_1 \chi_2 \tilde\psi] + \F\big[\eta_2 \chi_1 \tilde\psi] + \F\big[\eta_1 \eta_2 \tilde\psi] \geq - \frac{1}{b} \int_{\ann} \diff s \diff t \: \lf[ \eta_1^2 + \eta^2_2 + \eta_1^2 \eta_2^2 \ri] \big|\tilde\psi\big|^2 \geq - C |\log\eps|^2.
		\edm
		Using \eqref{eq: grad chi 1} it is easy to show that the second term in \eqref{eq: lb restriction eq 3} can be absorbed in the remainder, while, thanks to Agmon estimates,
		\bdm
			\int_{c_0|\log\eps| - \eps}^{\bar{t}} \diff t \int_{0}^{\frac{|\partial \Omega|}{\eps}} \diff s \: \big| \tilde \psi \big|^2 \leq \int_{\dist(\rv, \partial \Omega) \geq c_0 \eps|\log\eps| - \eps^2} \diff \rv \: \lf| \glm \ri|^2 = \OO(\eps^{c_0 C_A+1}),
		\edm
		i.e., $ \tilde\psi $ is still smaller than any power of $ \eps $ in the support of $ \chi_2^{\prime} $ and $ \eta_2^{\prime} $, which implies that the third term in  \eqref{eq: lb restriction eq 3} can be discarded as well. 
		
		In conclusion we obtained
		\beq
			\F \big[\tilde\psi\big] \geq \F \big[\chi_1 \chi_2 \tilde{\psi}\big] + \OO(|\log\eps|^2),
		\eeq
		and, setting $ \psi : = \chi_1 \chi_2 \tilde{\psi} $, the claim is proven.
	\end{proof}

	The rest of the lower bound proof is very close to the proof of \cite[Proposition 4.2]{CR1}. We sum up the main steps below.

	\begin{proof}[Proof of Proposition \ref{pro: lower bound}]
		Combining \eqref{eq: restriction} with \eqref{eq: pre lower bound} and the result of Lemma \ref{lem: lb restriction}, we have
		\beq
			\label{eq: lower bound eq 1}
			\glee \geq \F[\psi] + \OO(|\log\eps|^2).
		\eeq
		The next step is thus a lower bound to $ \F[\psi] $.
		
		First of all we extract from $ \F[\psi] $ the desired leading term in the energy asymptotics: by a standard splitting trick, we set 
		\beq
			\psi(s,t) = : \fs(t) u(s,t) e^{-i \as s},
		\eeq
		which defines a suitable $ u \in H^1_{\mathrm{loc}}(\acut) $. Note that, since $ \as $ is in general not an integer, $u $ is not periodic and therefore a multi-valued function, but $ |u| $ is periodic and this will suffice. Plugging the above ansatz in the functional $ \F $, we get
		\beq
			\label{eq: splitting}
			\F[\psi] = \frac{| \partial \ocut | \eones}{\eps} + \E[u],
		\eeq
		where
		\beq
			\E[u]:= \displaystyle\int_{\acut} \diff s \diff t\; \fs^2 \lf\{ |\partial_t u|^2 + |\partial_s u|^2 - 2 (t+ \as) {\bf e}_s \cdot \jv[u] + \tx\frac{1}{2b} \fs^2 \lf(1-|u|^2 \ri)^2 \ri\},
		\eeq
		and the superconducting current is given by
		\beq
			\jv[u] := \tx{\frac{i}{2}} \lf(u\nabla u^* - u^*\nabla u\ri).
		\eeq
		Since 
		\bdm
			|\partial \ocut| = |\partial \Omega| + \OO(\eps|\log\eps|),
		\edm
		the lower bound is proven if we can show that $ \E[u] \geq 0 $. The rest of the proof is focused on this claim.
		
		In order to investigate the positivity of $ \E[u] $ we use the potential function trick, i.e., we observe that the function $ F $ defined in \eqref{eq: F} satisfies
		\beq
			F^{\prime}(t)= 2 (t+ \as) \fs^2(t),
		\eeq
		and therefore
		\bdm
			 -2 \int_{\acut} \diff s\diff t \; (t+ \as) j_s(u) = - \int_{\acut} \diff s\diff t \; \partial_t F (t)\; j_s(u) =  \int_{\acut} \diff s\diff t \; F(t) \; \partial_t j_s[u]
		\edm
		where we have denote by $ j_s[u] = {\bf e}_s \cdot \jv[u] $ the $s-$component of the current. Here the boundary terms vanish because $ F(0) = 0$ and
		\beq
			\label{eq: vanishing u}
			u(s,c_0|\log\eps|) = 0,	\qquad		u(s_j \pm c_1 |\log\eps|, t) = 0,
		\eeq
		thanks to the boundary conditions inherited from $ \psi $ and the strict positivity of $ \fs $.

		We now integrate by parts in the $s$ variable the last two terms:
		\bmln{
			 \int_{\acut} \diff s\diff t \; F(t) \; \partial_t j_s[u] =  \tx\frac{i}{2} \disp\int_{\acut} \diff s\diff t \; F(t) \; \lf[ \partial_t u \partial_s u^* - \partial_t u^* \partial_s u + u \partial^2_{s,t} u^* - u^* \partial^2_{s,t} u \ri]  \\
			 = i \int_{\acut} \diff s\diff t \; F(t) \; \lf[ \partial_t u \partial_s u^* - \partial_t u^* \partial_s u  \ri],
		}
		where again boundary terms are absent thanks to the vanishing of $ u $ stated in \eqref{eq: vanishing u}. At this stage the non-periodicity of $ u $ could affect the result but this is not the case because $ u^* \partial_t u $ and its complex conjugate are always periodic. The simple estimate
		\bmln{
			 i \int_{\acut} \diff s\diff t \; F(t) \; \lf[ \partial_t u \partial_s u^* - \partial_t u^* \partial_s u  \ri] \geq - 2\displaystyle\int_{\acut} \diff s \diff t\; |F(t)||\partial_t u||\partial_s u|	\\
		 \geq\displaystyle\int_{\acut}\diff s\diff t\; F(t) \;[|\partial_t u|^2 +|\partial_s u|^2]
		}
		which uses the negativity of $ F $, then leads us to the lower bound for $ \E[u] $:
		\beq
			\E[u] \geq \displaystyle\int_{\acut} \diff s\diff t\; \lf\{ K(t) \lf( |\partial_t u|^2+|\partial_s u|^2 \ri) + \frac{1}{2b} \fs^4(1-|u|^2)^2 \ri\}.
		\eeq
		The pointwise positivity of $ K(t) $ for $ 1 < b < \theo^{-1} $ given in \eqref{eq: K positive} and the manifest positivity of the second term in the expression above yields the final lower bound
		\beq
			\label{eq: last lb}
			\E[u] \geq \frac{1}{2b} \int_{\acut} \diff s \diff t \: \fs^4(1-|u|^2)^2 \geq 0.
		\eeq
	\end{proof}
	
	We finalize now the proof of the main result:
	
	\begin{proof}[Proof of Theorem \ref{teo: GL asympt}]
		The combination of the energy upper (Proposition \ref{pro: upper bound}) and lower (Proposition \ref{pro: lower bound}) bounds yields the energy asymptotics \eqref{eq: gle asympt}. It only remains to prove the estimate on the $ L^2 $ norm of the difference $ |\glm|^2 - \fs^2 $. This is however trivially implied by the lower bound \eqref{eq: last lb}: if one keeps the positive term appearing on the r.h.s. of the inequality and put it together with \eqref{eq: lower bound eq 1}, the splitting \eqref{eq: splitting} and the upper bound \eqref{eq: upper bound}, the outcome is
		\beq
			\int_{\acut} \diff s \diff t \: \fs^4(1-|u|^2)^2 = \OO(|\log\eps|^2).
		\eeq
		By reconstructing first $ \psi(s,t)$ and then using \eqref{eq: psi}, one can easily realize that the regions where $ |\psi| $ differs from $ |\glm| $ can be discarded and their contribution be included in the remainder. The same holds true for the corner cells and therefore the final result is \eqref{eq: glm asympt}. Note the factor $ \eps^2 $ appearing on the r.h.s. due to the rescaling $ (s,t) \to (\sigma,\tau) = (\eps s, \eps t) $.
	\end{proof}


\begin{thebibliography}{aaaa999}

%\bibitem[Abr]{Abr} \textsc{A. Abrikosov}, On the magnetic properties of superconductors of the second type, {\it Soviet Phys. JETP} {\bf 5}, 1174--1182 (1957).

%\bibitem[A]{A} \textsc{A. Aftalion}, {\it Vortices in Bose-Einstein Condensates}, Progress in Nonlinear Differential Equations and their Applications \textbf{67}, Birkh\"auser, Basel, 2006.

%\bibitem[AS]{AS} \textsc{A. Aftalion, S. Serfaty}, Lowest Landau level approach in superconductivity for the Abrikosov lattice close to $H_c2$, \textit{Selecta Math.} \textbf{2}, 13, (2007).

%\bibitem[Alm1]{Alm} \textsc{Y. Almog}, Nonlinear Surface Superconductivity in the Large $ \kappa $ Limit, {\it Rev. Math. Phys.} {\bf 16}, 961--976 (2004).

%\bibitem[Alm2]{Alm2} \textsc{Y. Almog}, Abrikosov lattices in finite domains, \textit{Commun. Math. Phys.} \textbf{262}, 677-702 (2006).

\bibitem[AH]{AH} \textsc{Y. Almog, B. Helffer}, The Distribution of Surface Superconductivity along the Boundary: on a Conjecture of X.B. Pan, {\it SIAM J. Math. Anal.} {\bf 38}, 1715--1732 (2007).

%\bibitem[BCS]{BCS}	\textsc{J. Bardeen, L. Cooper, J. Schrieffer}, Theory of Superconductivity, {\it Phys. Rev.} {\bf 108}, 1175--1204 (1957).

%\bibitem[BBH1]{BBH1}\textsc{F. B\'{e}thuel, H. Br\'{e}zis, F. H\'{e}lein}, Asymptotics for the Minimization of a Ginzburg-Landau Functional, {\it Calc. Var. Partial Differential Equations} {\bf 1}, 123--148 (1993).

%\bibitem[BBH]{BBH2} \textsc{F. B\'{e}thuel, H. Br\'{e}zis, F. H\'{e}lein}, {\it Ginzburg-Landau Vortices}, Progress in Nonlinear Differential Equations and their Applications \textbf{13}, Birkh\"auser, Basel, 1994.

\bibitem[Bo]{Bo}	\textsc{V. Bonnaillie}, On the fundamental state energy for a Schr\"{o}dinger operator with magnetic field in domains with corners, {\it Asymptot. Anal.} {\bf 41} (2005), 215--258.
\bibitem[B-ND]{BD}	\textsc{V. Bonnaillie-No\"el, M. Dauge}, Asymptotics for the low-lying eigenstates of the Schr\"{o}dinger operator with magnetic field near corners, {\it Ann. Henri Poincar\'{e}} {\bf 7} (2006), 899--931.
\bibitem[B-NDP]{BDP} \textsc{V. Bonnaillie-No\"el, M. Dauge, N. Popoff}, Ground state energy of the magnetic Laplacian on general three-dimensional corner domains, {\it M\'{e}moires de la SMF} {\bf 145} (2016), 1--138.
\bibitem[B-NR]{BR} \textsc{V. Bonnaillie-No\"el, N. Raymond}, Magnetic Neumann Laplacian on a sharp cone, {\it Calc. Var. Partial Differential Equations} {\bf 53} (2015), 125--147.
\bibitem[B-NF]{B-NF} \textsc{V. Bonnaillie-No\"el, S. Fournais}, Superconductivity in Domains with Corners, {\it Rev. Math. Phys.} \textbf{19} (2007), 607--637.

\bibitem[CPRY1]{CPRY1} \textsc{M. Correggi, F. Pinsker, N. Rougerie, J. Yngvason}, Critical Rotational Speeds in the Gross-Pitaevskii Theory on a Disc with Dirichlet Boundary Conditions, {\it J. Stat. Phys.} {\bf 143}, 261--305 (2011).

\bibitem[CPRY2]{CPRY2} \textsc{M. Correggi, F. Pinsker, N. Rougerie, J. Yngvason}, Rotating Superfluids in Anharmonic Traps: From Vortex Lattices to Giant Vortices, \emph{Phys.\ Rev.\ A} \textbf{84}, 053614 (2011).

\bibitem[CPRY3]{CPRY3} \textsc{M. Correggi, F. Pinsker, N. Rougerie, J. Yngvason}, Critical Rotational Speeds for Superfluids in Homogeneous Traps, {\it J. Math. Phys.} {\bf 53}, 095203 (2012).

%\bibitem[CPRY3]{CPRY3} \textsc{M. Correggi, F. Pinsker, N. Rougerie, J. Yngvason}, Giant vortex phase transition in rapidly rotating trapped Bose-Einstein condensates, \textit{Eur. J. Phys. Special Topics} \textbf{217}, 183--188 (2013).

%\bibitem[CPRY4]{CPRY4} \textsc{M. Correggi, F. Pinsker, N. Rougerie, J. Yngvason}, Vortex Phases of Rotating Superfluids, Proceedings of the 21st International Laser Physics Workshop, Calgary, July 23-27, (2012).

%

\bibitem[CR1]{CRv} \textsc{M. Correggi, N. Rougerie}, Inhomogeneous Vortex Patterns in Rotating Bose-Einstein Condensates, \textit{Commun. Math. Phys.} \textbf{321}, 817--860 (2013).

\bibitem[CR2]{CR1} \textsc{M. Correggi, N. Rougerie}, On the Ginzburg-Landau Functional in the Surface Superconductivity Regime, \textit{Commun. Math. Phys.} \textbf{332} (2014), 1297--1343; erratum {\it Commun. Math. Phys.} {\bf 338} (2015), 1451--1452.

\bibitem[CR3]{CR2} \textsc{M. Correggi, N. Rougerie}, Boundary Behavior of the Ginzburg-Landau Order Parameter in the Surface Superconductivity Regime, {\it Arch. Rational Mech. Anal.} {\bf 219} (2015), 553--606.

\bibitem[CR4]{CR3}	\textsc{M. Correggi, N. Rougerie}, Effects of boundary curvature on surface superconductivity, {\it Lett. Math. Phys.}  {\bf 106} (2016), 445--467. 


\bibitem[CRY]{CRY}	\textsc{M. Correggi, N. Rougerie, J. Yngvason}, The Transition to a Giant Vortex Phase in a Fast Rotating Bose-Einstein Condensate, {\it Commun. Math. Phys.} {\bf 303}, 451--508 (2011).

\bibitem[Da]{Da}	\textsc{M. Dauge}, {\it Elliptic boundary value problems on corner domains}, {\it Lect. Notes Math.} {\bf 1341}, Springer-Verlag, Berlin, 1988.

%\bibitem[FH1]{FH2} \textsc{S. Fournais, B. Helffer}, Energy asymptotics for type II superconductors, \textit{Calc. Var. Partial Differential Equations} \textbf{24}, 341--376 (2005).

%\bibitem[FH2]{FH3} \textsc{S. Fournais, B. Helffer}, On the third critical field in Ginzburg-Landau theory, \textit{Commun. Math. Phys.} \textbf{266}, 153--196 (2006).

\bibitem[FH3]{FH1} \textsc{S. Fournais, B. Helffer}, \textit{Spectral Methods in Surface Superconductivity}, Progress in Nonlinear Differential Equations and their Applications \textbf{77}, Birkh\"auser, Basel, 2010.

%\bibitem[FHP]{FHP} \textsc{S. Fournais, B. Helffer, M. Persson}, Superconductivity between $ H_{c_2} $ and $ H_{c_3} $, {\it J. Spectr. Theory} {\bf 1}, 273--298 (2011).

%\bibitem[FK]{FK} \textsc{S. Fournais, A. Kachmar}, Nucleation of bulk superconductivity close to critical magnetic field, \textit{Adv. Math.} \textbf{226}, 1213--1258 (2011).

%\bibitem[FKP]{FKP} \textsc{S. Fournais, A. Kachmar, M. Persson}, The ground state energy of the three dimensional Ginzburg-Landau functional. Part II: surface regime, \textit{J. Math. Pures. App.} \textbf{99}, 343--374 (2013).

%\bibitem[FHSS]{FHSS}	\textsc{R.L. Frank, C. Hainzl, R. Seiringer, J.P. Solovej}, Microscopic Derivation of Ginzburg-Landau Theory, {\it J. Amer. Math. Soc.} {\bf 25}, 667--713 (2012).

\bibitem[GL]{GL} \textsc{V.L. Ginzburg, L.D. Landau}, On the theory of superconductivity, {\it Zh. Eksp. Teor. Fiz.} {\bf 20}, 1064--1082 (1950).

%\bibitem[Gor]{Gor} \textsc{L.P. Gor'kov}, Microscopic derivation of the Ginzburg-Landau equations in the theory of superconductivity, \textit{Zh. Eksp. Teor. Fiz.} \textbf{36}, 1918--1923 (1959); english translation \textit{Soviet Phys. JETP} \textbf{9}, 1364--1367 (1959).

\bibitem[Gr]{Gr} \textsc{P. Grisvard}, {\it Elliptic Problems in Nonsmooth Domains},  Classics in Applied Mathematics {\bf 69}, SIAM,  2011.

%\bibitem[H {\it et al}]{Hetal} \textsc{H. F. Hess, R. B. Robinson, R. C. Dynes, J. M. Valles Jr., J. V. Waszczak}, Scanning-Tunneling-Microscope Observation of the Abrikosov Flux Lattice and the Density of States near and inside a Fluxoid, \textit{Phys. Rev. Lett.} \textbf{62}, 214 (1989).

\bibitem[Kac]{Kac} \textsc{A. Kachmar}, The Ginzburg-Landau order parameter near the second critical field, \emph{SIAM J. Math. Anal.} \textbf{46}, 572--587 (2014).

%\bibitem[LM]{LM} \textsc{L. Lassoued, P. Mironescu}, Ginzburg-Landau Type Energy with Discontinuous Constraint, \emph{J. Anal. Math.} \textbf{77} (1999), 1--26.

%\bibitem[LL]{LL} \textsc{E.H. Lieb, M. Loss}, \emph{Analysis}, Graduate Studies in Mathematics {\bf 14}, AMS, Providence, 1997.

%\bibitem[LP]{LP} \textsc{K. Lu, X.B. Pan}, Estimates of the upper critical field for the Ginzburg-Landau equations of superconductivity, \textit{Physica D} \textbf{127}, 73--104 (1999).

\bibitem[N {\it et al}]{NSG}	\textsc{Y.X. Ning, C.L. Song, Z.L. Guan, X.C. Ma, X. Chen, J.F. Jia, Q.K. Xue}, Observation of surface superconductivity and direct vortex imaging of a Pb thin island with a scanning tunneling microscope, {\it Europhys. Lett.} {\bf 85} (2009), 27004.

%\bibitem[Ni]{Ni}	\textsc{L. Nirenberg}, On Elliptic Partial Differential Equations, {\it Ann. Scuola Norm. Sup. Pisa} {\bf 13} (1959), 115--162.

\bibitem[Pa]{Pa} \textsc{X. B. Pan}, Surface Superconductivity in Applied Magnetic Fields above $\Hcc$, \textit{Commun. Math. Phys.} \textbf{228} (2002), 327--370.

%\bibitem[R1]{R1} \textsc{N. Rougerie}, The Giant Vortex State for a Bose-Einstein Condensate in a Rotating Anharmonic Trap: Extreme Rotation Regimes, {\it J. Math. Pures Appl.} {\bf 95}, 296--347 (2011).

%\bibitem[SS1]{SS1} \textsc{E. Sandier, S. Serfaty}, The decrease of bulk-superconductivity near the second critical field in the Ginzburg-Landau model, \textit{SIAM J. Math. Anal.} \textbf{34}, 939-956 (2003).

\bibitem[SS]{SS} \textsc{E. Sandier, S. Serfaty}, \emph{Vortices in the Magnetic Ginzburg-Landau Model}, Progress in Nonlinear Differential Equations and their Applications \textbf{70}, Birkh\"{a}user, Basel, 2007.
erratum available at http://www.ann.jussieu.fr/serfaty/publis.html.

%\bibitem[Sig]{Sig} \textsc{I.M. Sigal}, Magnetic Vortices, Abrikosov Lattices and Automorphic Functions, preprint \textit{arXiv:1308.5446} (2013).

\bibitem[SJdG]{SJdG} \textsc{D. Saint-James, P.G. de Gennes}, Onset of superconductivity in decreasing fields, \textit{Phys. Lett.} \textbf{7}, 306--308 (1963).

\end{thebibliography}
\end{document}